%
%
%   hbar dependent KP hierarchy
%   ``International Workshop on Classical and Quantum
%    Integrable Systems 2011''
%   January 24-27, 2011
%   Protvino, Russia
%
%   Kanehisa Takasaki, Takashi Takebe
%
%%%%%%%%%%%%%%%%%%%%%%%%%%%%%%%%%%%%%%%%%%%%%%%%%%%%%%%%%%%%%%%%%%%%%%%%
%\documentclass{amsart}
%\documentclass[12pt]{amsart}
\documentclass[12pt,leqno]{article}
\usepackage{amsmath,amssymb,amsthm}
\newcommand{\defeq}{\; {\overset{\text{def}}=} \;}
\DeclareMathOperator{\Ad}{Ad}
\DeclareMathOperator{\ad}{ad}

\newcommand{\calB}{{\mathcal B}}
\newcommand{\calF}{{\mathcal F}}
\newcommand{\calL}{{\mathcal L}}
\newcommand{\calM}{{\mathcal M}}
\newcommand{\calP}{{\mathcal P}}
\newcommand{\calQ}{{\mathcal Q}}
\newcommand{\calX}{{\mathcal X}}
\newcommand{\commentout}[1]{}
\newcommand{\der}{\partial}

\DeclareMathOperator{\ord}{ord}
\DeclareMathOperator{\ordh}{ord^{\hbar}}
\DeclareMathOperator{\ordx}{ord_\xi}
\DeclareMathOperator{\Res}{Res}
\newcommand{\symh}{\sigma^\hbar}
\newcommand{\totsym}{\sigma_{\mathrm{tot}}}
\newtheorem{thm}{Theorem}[section]
\newtheorem{lem}[thm]{Lemma}
\newtheorem{prop}[thm]{Proposition}

\theoremstyle{definition}

\theoremstyle{remark}

\newtheorem{rem}[thm]{Remark}

  % to make the notation environment unnumbered

%\renewcommand{\theequation}{\thesection.\arabic{equation}}
\numberwithin{equation}{section}

\newcommand\thmref[1]{Theorem~\ref{#1}}
\newcommand\propref[1]{Proposition~\ref{#1}}
\newcommand\secref[1]{Section~\ref{#1}}
\newcommand\appref[1]{Appendix~\ref{#1}}

\newcommand\lemref[1]{Lemma~\ref{#1}}

%%%%%%%%%%%%%%%%%%%%%%%%%%%%%%%%%%%%%%%%%%%%%%%%%%%%%%%%%%%%%%%%%%%%%%%%
\begin{document}
% regulation for the journal ``Theoretical and Mathematical Physics''
\baselineskip=20pt
\hsize=340pt
\vsize=490pt
%%%%%%%%%%%%%%%%%%%%%%%%%%%%%%%%%%%%%%%%%%%%%%%%%%%%%%%%%%%%%%%%%%%%%%%%
%\title[$\hbar$-expansion of KP]{
\title{
       $\hbar$-dependent KP hierarchy
%       $\hbar$-expansion of KP hierarchy:\\
%       recursive construction of solutions
}
\author{Kanehisa Takasaki\\
%\address{
Graduate School of Human and Environmental Studies,\\
Kyoto University,\\
Yoshida, Sakyo, Kyoto, 606-8501, Japan\\
%}
%
%\email{takasaki@math.h.kyoto-u.ac.jp}
e-mail: takasaki@math.h.kyoto-u.ac.jp\\
\\
Takashi Takebe\\
%\address{
Faculty of Mathematics,\\
National Research University -- Higher School of Economics,\\
Vavilova Street, 7, Moscow, 117312, Russia\\
%}
%\email{ttakebe@hse.ru}
e-mail: ttakebe@hse.ru}
\date{}
\maketitle
\begin{abstract}
%The $\hbar$-dependent KP hierarchy is a formulation of 
%the KP hierarchy that depends on the Planck constant $\hbar$ 
%and reduces to the dispersionless KP hierarchy as $\hbar \to 0$.  
%This paper presents a recursive construction of solutions 
%A recursive construction of solutions 
%of the $\hbar$-dependent KP hierarchy on the basis of 
This is a summary of a recursive construction of solutions of the
 $\hbar$-dependent KP hierarchy. 
% on the basis of a Riemann-Hilbert problem for the pair $(L,M)$ 
%of Lax and Orlov-Schulman operators
%
%is presented.  %To this end, 
%
%The Riemann-Hilbert problem is converted to a set %a large set 
%of 
We give recursion relations for the coefficients $X_n$ 
of an $\hbar$-expansion of the operator 
$X = X_0 + \hbar X_1 + \hbar^2X_2 + \cdots$ 
for which the dressing operator $W$ is expressed 
in the exponential form $W = \exp(X/\hbar)$.  
%Given the lowest order term $X_0$, %which is related 
%to the dispersionless KP hierarchy, 
%one can solve 
%the recursion relations to obtain the higher order terms. 
The wave function $\Psi$ associated with $W$ turns out 
to have the WKB form $\Psi = \exp(S/\hbar)$, 
and the coefficients $S_n$ of the $\hbar$-expansion 
$S = S_0 + \hbar S_1 + \hbar^2 S_2 + \cdots$, 
too, are determined by a set of recursion relations. 
This WKB form is used to show that the associated 
tau function has an $\hbar$-expansion of the form 
$\log\tau = \hbar^{-2}F_0 + \hbar^{-1}F_1 + F_2 + \cdots$. 

\end{abstract}

%Key words: $\hbar$-dependent KP hierarchy, dispersionless KP hieararchy,
%recursive construction, quantised canonical transformation
%
%2000 Mathematics Subject Classification Numbers: 37K10

\setcounter{section}{-1}
\section{Introduction}
\label{sec:intro}

\commentout{
The KP hierarchy can be completely solved by 
several methods. The most classical methods are 
based on Grassmann manifolds \cite{sat-sat:82}, \cite{seg-wil:85},
fermions and vertex operators \cite{djkm} and 
factorisation of microdifferential operators \cite{mul:84}.
Unfortunately, those methods are not very suited for 
a ``quasi-classical'' ($\hbar$-dependent, where $\hbar$ 
is the Planck constant) formulation \cite{tak-tak:95} 
of the KP hierarchy.  
}

The $\hbar$-dependent formulation of the KP hierarchy 
was introduced to study the dispersionless KP hierarchy 
\cite{kod-gib}, \cite{kri:91}, \cite{tak-tak:91} as a classical limit 
(i.e., the lowest order of the $\hbar$-expansion) 
of the KP hierarchy.   This point of view turned out 
to be very useful for understanding various features 
of the dispersionless KP hierarchy. %such as 
%Lax equations, Hirota equations, infinite dimensional 
%symmetries, etc., in the light of the KP hierarchy.  
In this paper, we return to the $\hbar$-dependent 
KP hierarchy itself, and consider all orders 
of the $\hbar$-expansion.  

We first address the issue of solving a Riemann-Hilbert problem for the
pair $(L,M)$ of Lax and Orlov-Schulman operators \cite{orl-sch}. This is
a kind of ``quantisation'' of a Riemann-Hilbert problem that solves the
dispersionless KP hierarchy \cite{tak-tak:91}. % Though the
%Riemann-Hilbert problem for the full $\hbar$-KP hierarchy was formulated
%in our previous work \cite{tak-tak:95}, we did not consider the
%existence of its solution in a general setting.  
In this paper, we
settle this issue by an $\hbar$-expansion of the dressing operator $W$,
which is assumed to have the exponential form $W = \exp(X/\hbar)$ with
an operator $X$ of negative order.  Roughly speaking, the coefficients
$X_n$, $n = 0,1,2,\ldots$, of the $\hbar$-expansion of $X$ are shown to
be determined recursively from the lowest order term $X_0$ (in other
words, from a solution of the dispersionless KP hierarchy). 
% More precisely, we
%need many auxiliary quantities other than $X_n$, for which we can derive
%a set of recursion relations.  Thus our construction is
%essentially recursive.

We next convert this result to the language of 
the wave function $\Psi$. % This, too, is to answer a problem 
%overlooked in our previous paper \cite{tak-tak:95}.   
Namely, given the dressing operator 
in the exponential form $W = \exp (X/\hbar)$, 
we show that the associated wave function has the WKB form 
$\Psi=\exp(S/\hbar)$ with a phase function $S$ expanded 
into nonnegative powers of $\hbar$. 
\commentout{
This is genuinely 
a problem of calculus of microdifferential operators 
rather than that of the KP hierarchy.  A simplest example 
such as $X=x(\hbar\der)^{-1}$ demonstrates that this problem 
is by no means trivial. }%
Borrowing an idea from Aoki's 
``exponential calculus'' of microdifferential operators \cite{aok:86}, 
we show that dressing operators of the form $W = \exp(X/\hbar)$ 
and wave functions of the form $\Psi = \exp(S/\hbar)$ 
are determined from each other by a set of recursion relations 
for the coefficients of their $\hbar$-expansion. 
%%%%%
%More precisely, we need many auxiliary quantities other than $X$ and
%$S$, for which we can derive a large set of recursion relations. Thus
%our construction is essentially recursive.
%%%%%
Consequently, 
the wave function of the solution of the aforementioned 
Riemann-Hilbert problem, too, are recursively determined 
by the $\hbar$-expansion.  

Having the $\hbar$-expansion of the wave function, 
we can readily derive an $\hbar$-expansion of the tau function 
as stated in our previous work \cite{tak-tak:95}.  
%This $\hbar$-expansion is a generalisation of 
%the ``genus expansion'' of partition functions in string theories 
%and random matrices \cite{dij:91}, \cite{kri:91}, \cite{mor:94},
%\cite{dfgz}.

\commentout{
This paper is organised as follows.  
Section 1 is a review of the $\hbar$-dependent formulation 
of the KP hierarchy.  Relevant Riemann-Hilbert problems 
are also reviewed here.  
Section 2 presents the recursive solution of 
the Riemann-Hilbert problem.  

%A technical clue is 
%the Campbell-Hausdorff formula, details of which 
%are collected in Appendix A.
%  The construction 
%of solution is illustrated for the case of 
%the Kontsevich model \cite{adl-vaM:92} in Appendix B. 
Section 3 deals with the $\hbar$-expansion of the wave function. 
Aoki's exponential calculus is also briefly reviewed here.  
Section 4 mentions the $\hbar$-expansion of the tau function. 
%Section 5 is devoted to concluding remarks. 
}

Details are found in \cite{tak-tak:09} and shall be published
elsewhere.

\section{$\hbar$-dependent KP hierarchy: review}
\label{sec:dkp}

In this section we recall several facts on the KP hierarchy depending on
a formal parameter $\hbar$ in \cite{tak-tak:95}, \S1.7.

The $\hbar$-dependent KP hierarchy is defined by the Lax representation
\begin{equation}
     \hbar \frac{\der L}{\der t_n} = [ B_n, L ], \qquad
    B_n = (L^n)_{\geq 0}, \qquad n=1,2,\ldots,
\label{kph}
\end{equation}
where the {\em Lax operator} $L$ is a microdifferential operator of the
form
\begin{equation}
    L = \hbar \der
      + \sum_{n=1}^\infty u_{n+1}(\hbar,x,t)(\hbar\der)^{-n}, \qquad
    \der = \frac{\der}{\der x},
\label{L}
\end{equation}
and ``$(\quad)_{\geq 0}$'' stands for the projection onto a differential
operator dropping negative powers of $\der$. The coefficients
$u_n(\hbar,x,t)$ of $L$ are formally regular with respect to $\hbar$. %,
%i.e., This means that they have an asymptotic expansion of the
%form
%$u_n(\hbar,x,t) = \sum_{m=0}^\infty \hbar^m u_n^{(m)}(x,t)$ as $\hbar
%\to 0$.

\commentout{
We need two kinds of ``order'' of microdifferential operators: one is
the ordinary order,
\begin{equation}
    \ord \left( \sum a_{n,m}(x,t) \hbar^n \der^m \right)
    \defeq
    \max \left\{ m 
       \,\left|\, \sum_{n} a_{n,m}(x,t)\hbar^n \neq 0 
         \right.\right\},
\label{def:ord}
\end{equation}
and the other is the $\hbar$-order defined by
}

We introduce the notion of the {\em $\hbar$-order} defined by
\begin{equation*}
    \ordh \left( \sum a_{n,m}(x,t) \hbar^n \der^m \right)
    \defeq
    \max \{ m-n \,|\, a_{n,m}(x,t) \neq 0 \}.
%\label{def:ordh}
\end{equation*}
In particular, $\ordh\hbar=-1$, $\ordh\der=1$, $\ordh\hbar\der=0$. The
regularity condition which we imposed on the coefficients
$u_n(\hbar,x,t)$ can be restated as $\ordh(L) = 0$.
The {\em principal symbol} %(resp.\ the {\em symbol of order $l$})
of a microdifferential operator $A = \sum a_{n,m}(x,t)\hbar^n \der^m$ with
respect to the $\hbar$-order is
%\begin{equation}
$    \symh(A)
    \defeq \sum_{m-n = \ord(A)} a_{n,m}(x,t) \xi^m$.
%    \quad
%    (\text{resp. } 
%    \symh_l(A)
%    \defeq \sum_{m-n = l} a_{n,m}(x,t) \xi^m).
%\label{def:pr-symbol}
%\end{equation}
%\begin{align}
%\label{def:symbol}
%\end{align}
%When it is clear from the context, we sometimes use $\symh$ instead
%of $\symh_l$.

\commentout{
\begin{rem}
This ``order'' coincides with the order of an microdifferential
operator if we formally replace $\hbar$ with $\der_{t_0}^{-1}$, where
$t_0$ is an extra variable. In fact, naively extending \eqref{kph} to
$n=0$, we can introduce the time variable $t_0$ on which nothing
depends. See also \cite{kas-rou:08}.
\end{rem}
}

As in the usual KP theory, the Lax operator $L$ is expressed by a {\em
dressing operator} $W$:
\begin{equation}
    L = \Ad W (\hbar\der) = W (\hbar\der) W^{-1}
\label{L=Ad(W)d}
\end{equation}
The dressing operator $W$ should have a specific form:
\begin{equation}
    W = \exp( \hbar^{-1} X(\hbar,x,t,\hbar\der) )
        (\hbar\der)^{\alpha(\hbar)/\hbar},
\label{W=exp(X)}
\end{equation}
where $X(\hbar,x,t,\hbar\der) = \sum_{k=1}^\infty \chi_k(\hbar,x, t)
(\hbar\der)^{-k}$ is a $0$-th order operator, $\ordh (X)=0$, and
%\label{X}
%\\
%     = \ordh \alpha(\hbar) = 0,
%\label{ordX=0}
%\end{gather}
and $\alpha(\hbar)$ is a constant with respect to $x$ and $t$ with
$\hbar$-order 0, $\ordh \alpha(\hbar) = 0$. %(In \cite{tak-tak:95} we did
%not introduce $\alpha$, which will be necessary in
%\secref{sec:recursion}.)

The {\em wave function} $\Psi(\hbar,x,t;z)$ is defined by
\begin{equation}
    \Psi(\hbar,x,t;z) 
    = W e^{(xz + \zeta(t,z))/\hbar},
\label{def:wave-func}
\end{equation}
where $\zeta(t,z)=\sum_{n=1}^\infty t_n z^n$. It is a solution of
linear equations
%\begin{equation*}
$    L \Psi = z \Psi$, %\qquad 
$    \hbar\frac{\der \Psi}{\der t_n} = B_n \Psi$ %\quad 
($n=1,2,\dotsc$)
%\end{equation*}
and has the WKB form %\eqref{wave-func}, 
as we shall show in
\secref{sec:wave-function}. Moreover it is expressed by means of the
{\em tau function} $\tau(\hbar,t)$ as follows:
\begin{equation}
    \Psi(\hbar,x,t;z) =
    \frac{\tau(t+x-\hbar[z^{-1}])}{\tau(t)}
    e^{\hbar^{-1}\zeta(t,z)},
\label{tau/tau}
\end{equation}
where $t+x=(t_1+x,t_2,t_3,\dotsc)$ and
$[z^{-1}]=(1/z,1/2z^2,1/3z^3,\dots)$. We shall study the
$\hbar$-expansion of the tau function in \secref{sec:tau-function}.

The {\em Orlov-Schulman operator} $M$ is defined by 
\begin{equation}
    M = 
    \Ad \left(
          W \exp\left(\hbar^{-1} \zeta(t,\hbar\der) \right)
        \right)x 
    =
    W \left( \sum_{n=1}^\infty n t_n (\hbar \der)^{n-1} + x \right)
    W^{-1}
\label{def:M}
\end{equation}
where $\zeta(t,\hbar\der) = \sum_{n=1}^\infty t_n (\hbar\der)^n$. It is easy
to see that $M$ has a form
\begin{equation}
    M = \sum_{n=1}^\infty n t_n L^{n-1} + x + \alpha(\hbar) L^{-1} +
        \sum_{n=1}^\infty v_n(\hbar,x, t) L^{-n-1},
\label{M}
\end{equation}
and satisfies % the following properties:
$\ordh(M) = 0$, the canonical commutation relation $[L, M] = \hbar$ and
the same Lax equations as $L$: 
%\begin{equation}
$
    \hbar \frac{\der M}{\der t_n} = [ B_n, M ]$,
$n=1,2,\dotsc$.
%\label{lax:M}
%\end{equation}

\commentout{

\begin{itemize}
\item $\ordh(M) = 0$;
\item the canonical commutation relation: $[L, M] = \hbar$;
\item the same Lax equations as $L$:
\begin{equation}
    \hbar \frac{\der M}{\der t_n} = [ B_n, M ],\quad
    n=1,2,\ldots.
\label{lax:M}
\end{equation}
\item another linear equation for the wave function $\Psi$:
\begin{equation*}
    M \Psi = \hbar\frac{\der \Psi}{\der z}.
\end{equation*}
\end{itemize}
}

\commentout{

\begin{rem}
\label{rem:alpha=const}
 If an operator $M$ of the form \eqref{M} satisfies the Lax equations
 \eqref{lax:M} and the canonical commutation relation $[L,M]=\hbar$ with
 the Lax operator $L$ of the KP hierarchy, then $\alpha(\hbar)$ in the
 expansion \eqref{M} does not depend on any $t_n$ nor on $x$. In
 fact, expanding the canonical commutation relation, we have
\[
    \hbar + \hbar\frac{\der\alpha}{\der x} (\hbar\der)^{-1} 
          + (\text{lower order terms}) 
    = \hbar,
\]
 which implies $\frac{\der\alpha}{\der x}=0$. Similarly, from
 \eqref{lax:M} follows $\frac{\der\alpha}{\der t_n}=0$ with the help of
 \eqref{kph} and $[L^n, M] = n\hbar L^{n-1}$.
%%%
\commentout{
The left hand side of
 \eqref{lax:M} is expanded as
\begin{multline}
    n \hbar L^{n-1}
    + \sum_{m=1}^\infty mt_m \hbar \frac{\der L^{m-1}}{\der t_n}
    + \hbar \frac{\der\alpha}{\der t_n} L^{-1} 
    + \alpha \hbar \frac{\der L^{-1}}{\der t_n}\\
    + \sum_{m=1}^\infty \hbar \frac{\der v_m}{\der t_n} L^{-m-1}
    + \sum_{m=1}^\infty v_m \hbar \frac{\der L^{-m-1}}{\der t_n},
\label{dM/dt}
\end{multline}
 whereas the right hand side is 
\begin{multline}
    \sum_{m=1}^\infty m t_m [B_n, L^{n-1}] + [B_n, x]
    + [B_n, \alpha] L^{-1} + \alpha [B_n, L^{-1}]\\
    + \sum_{n=1}^\infty [B_n, v_n] L^{-n-1}
    + \sum_{n=1}^\infty v_n [B_n, L^{-n-1}].
\label{[Bn,M]}
\end{multline}
 Rewriting \eqref{[Bn,M]} by the Lax equations \eqref{kph} and comparing
 it with \eqref{dM/dt}, we have
\begin{equation}
 \begin{split}
    &n L^{n-1} + \hbar \frac{\der\alpha}{\der t_n} L^{-1} 
    + \sum_{m=1}^\infty \hbar \frac{\der v_m}{\der t_n} L^{-m-1}
\\
    =&
    [B_n, x] 
    + [B_n, \alpha] L^{-1}
    + \sum_{n=1}^\infty [B_n, v_n] L^{-n-1}.
 \end{split}
\label{dM/dt=[Bn,M]:temp}
\end{equation}
 Expanding $n\hbar L^{n-1}=[L^n,M]$ by \eqref{M} and
 substituting it to \eqref{dM/dt=[Bn,M]:temp}, we have
\begin{multline*}
    \hbar \frac{\der\alpha}{\der t_n} L^{-1} 
    + \sum_{m=1}^\infty \hbar \frac{\der v_m}{\der t_n} L^{-m-1}
\\
    =
    [-(L^n)_{<0}, x]
    + [-(L^n)_{<0}, \alpha] L^{-1}
    + \sum_{n=1}^\infty [-(L^n)_{<0}, v_n] L^{-n-1}.
\end{multline*}
 where $(\quad)_{<0}$ is the projection to the negative order part. The
 terms of order $-1$ in this equation are
\begin{equation*}
    \hbar \frac{\der \alpha}{\der t_n} (\hbar\der)^{-1} = 0,
\end{equation*}
 which proves that $\alpha$ does not depend on $t_n$.
\hfill
$\square$
}% end of commentout
\end{rem}
}

The following proposition (Proposition 1.7.11 of \cite{tak-tak:95}) is a 
Riemann-Hilbert type construction of solutions of the $\hbar$-KP
hierarchy. 
%``dispersionful'' counterpart of the theorem for the dispersionless KP
%hierarchy found earlier (Proposition 7 of \cite{tak-tak:91}; cf.\
%\propref{prop:dRH} below).

\begin{prop}
\label{prop:RH}
(i)
 Let $f(\hbar,x,\hbar\der)$ and $g(\hbar,x,\hbar\der)$ be $0$-th order
 operators ($\ordh f = \ordh g = 0$) and canonically commuting,
 $[f,g]=\hbar$, operators $L$ and $M$ have the form \eqref{L} and
 \eqref{M} respectively and commute canonically, $[L,M]=\hbar$.
 Suppose $f(\hbar,M,L)$ and $g(\hbar,M,L)$ are differential operators:
 $(f(\hbar,M,L))_{<0} = (g(\hbar,M,L))_{<0} = 0$, where $(\quad)_{<0}$
 is the projection to the negative order part: $P_{<0}:=P-P_{\geq 0}$.
 Then $L$ is a solution of the KP hierarchy \eqref{kph} and $M$ is the
 corresponding Orlov-Schulman operator.

(ii)
 Conversely, for any solution $(L,M)$ %of the $\hbar$-dependent KP
 %hierarchy 
 there exists a pair $(f,g)$ satisfying the conditions in (i).
\end{prop}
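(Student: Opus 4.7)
The plan is to reduce everything to the $t$-evolution of a dressing operator, exploiting the condition that $F := f(\hbar,M,L)$ and $G := g(\hbar,M,L)$ are differential operators.

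For part (i), since $L$ and $M$ have the prescribed forms with $[L,M]=\hbar$, one first constructs a dressing operator $\tilde{W} = W\exp(\hbar^{-1}\zeta(t,\hbar\der))$ order by order so that $L = \tilde{W}(\hbar\der)\tilde{W}^{-1}$ and $M = \tilde{W}x\tilde{W}^{-1}$; the matching $[\hbar\der,x]=\hbar=[L,M]$ guarantees consistency at each step. Setting $A_n := \hbar(\der_{t_n}\tilde{W})\tilde{W}^{-1}$, differentiation of the dressing relations yields $\hbar\der_{t_n}L = [A_n,L]$ and $\hbar\der_{t_n}M = [A_n,M]$, so the task reduces to showing $A_n = (L^n)_{\geq 0}$. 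Since $F = \tilde{W}f\tilde{W}^{-1}$ and $G = \tilde{W}g\tilde{W}^{-1}$ are differential operators, the $t$-derivatives $[A_n,F]$ and $[A_n,G]$ are also differential operators. Decomposing $A_n = (A_n)_{\geq 0}+(A_n)_{<0}$, the $(A_n)_{\geq 0}$ contributions to these commutators are automatically differential operators, so $[(A_n)_{<0},F]$ and $[(A_n)_{<0},G]$ must be differential operators as well. The canonical commutation $[f,g]=\hbar$ transfers under the dressing to $\{\symh F,\symh G\}=1$, making $\symh F$ and $\symh G$ a canonically conjugate pair on the $(x,\xi)$-plane. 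A principal-symbol argument iterated on the $\hbar$-order of $(A_n)_{<0}$ then forces $(A_n)_{<0}=0$, because a function Poisson-commuting with both members of a symplectic pair is locally constant and a nonzero negative-order microdifferential operator cannot have constant symbol. Hence $A_n$ is a differential operator, and a final order comparison in $[A_n-(L^n)_{\geq 0},L]$ combined with the analogous identity for $M$ (which eliminates the centralizer ambiguity via $[L,M]=\hbar$) yields $A_n = (L^n)_{\geq 0}$.

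For part (ii), let $W_0 := W|_{t=0}$ and set $f := W_0^{-1}(\hbar\der)W_0$ and $g := W_0^{-1}xW_0$. These are $0$-th $\hbar$-order operators with $[f,g] = W_0^{-1}[\hbar\der,x]W_0 = \hbar$. Since $(L,M)$ is a solution of the hierarchy, the Sato--Wilson equation gives $\hbar\der_{t_n}\tilde{W} = B_n\tilde{W}$, so $U(t):=\tilde{W}(t)W_0^{-1}$ satisfies $\hbar\der_{t_n}U = B_n U$ with $U(0)=1$. Then $f(\hbar,M,L) = \tilde{W}f\tilde{W}^{-1} = U(\hbar\der)U^{-1}$ solves the linear Lax-type ODE $\hbar\der_{t_n}V = [B_n,V]$ with initial datum $\hbar\der$, a differential operator. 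Because $B_n$ is a differential operator, this flow preserves the space of differential operators, so $f(\hbar,M,L)$ remains a differential operator for all $t$; the analogous reasoning gives the same conclusion for $g(\hbar,M,L) = UxU^{-1}$.

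The principal obstacle is the iterated symbol argument in part (i) forcing $(A_n)_{<0}=0$. It requires careful bookkeeping to distinguish $\hbar$-order from ordinary order and leverages the symplectic interpretation induced by $[f,g]=\hbar$; the subsequent identification of $A_n$ with $(L^n)_{\geq 0}$ among differential operators is comparatively routine, using that the common centralizer of $L$ and $M$ is trivial.
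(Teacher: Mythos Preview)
The paper does not supply a proof of this proposition; it is simply quoted as Proposition~1.7.11 of \cite{tak-tak:95}. Your overall strategy---constructing a dressing operator, introducing $A_n=\hbar(\partial_{t_n}\tilde W)\tilde W^{-1}$, and exploiting that $F=f(\hbar,M,L)$ and $G=g(\hbar,M,L)$ are differential operators to pin down $A_n$---is the standard one and matches what one finds in that reference. Part~(ii) is correct as written: defining $f,g$ by undressing at $t=0$ and propagating the differential-operator property along the Sato--Wilson flow $\hbar\partial_{t_n}U=B_nU$ is exactly the right move.

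There is, however, a gap in the final step of part~(i). After the symbol argument gives $(A_n)_{<0}=0$, you try to identify $A_n$ with $B_n=(L^n)_{\ge0}$ by an ``order comparison in $[A_n-B_n,L]$'' together with a centralizer argument. But at this stage you do not know $[A_n-B_n,L]=0$: that identity is equivalent to the KP equation you are trying to prove, and $[A_n,L]=\hbar\partial_{t_n}L$ while $[B_n,L]=-[(L^n)_{<0},L]$ are not a~priori equal. The centralizer argument therefore never gets started. The fix is much simpler than what you propose: from $\tilde W=W\exp(\hbar^{-1}\zeta(t,\hbar\partial))$ one computes directly
\[
A_n \;=\; \hbar(\partial_{t_n}W)W^{-1}+L^n,
\]
and since $W$ has the form \eqref{W=exp(X)} the first term has negative ordinary order. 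Hence $(A_n)_{\ge0}=(L^n)_{\ge0}=B_n$ \emph{before} any symbol argument, and once you have shown $(A_n)_{<0}=0$ the conclusion $A_n=B_n$ is immediate. You should insert this observation at the outset; it also clarifies the symbol argument itself, since it tells you in advance that the object you are trying to kill, $(A_n)_{<0}=\hbar(\partial_{t_n}W)W^{-1}+(L^n)_{<0}$, has controlled $\hbar$-order.
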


The leading term of this system with respect to the $\hbar$-order gives
the {\em dispersionless KP hierarchy}. Namely,
\begin{equation}
    \calL := \symh(L)
    = \xi
    + \sum_{n=1}^\infty u_{0,n+1} \xi^{-n}, \qquad
    (u_{0,n+1} := \symh(u_{n+1}))
\label{calL}
\end{equation}
satisfies the dispersionless Lax type equations
\begin{equation}
    \frac{\der \calL}{\der t_n} = \{ \calB_n, \calL\}, \qquad
    \calB_n = (\calL^n)_{\geq 0}, \qquad n=1,2,\ldots,
\label{dkp}
\end{equation}
where $(\quad)_{\geq 0}$ is the truncation of Laurent series to its
polynomial part and $\{,\}$ is the Poisson bracket defined by
%\begin{equation}
$
    \{a(x,\xi), b(x,\xi)\}
    =
    \frac{\der a}{\der \xi} \frac{\der b}{\der x}
    -
    \frac{\der a}{\der x} \frac{\der b}{\der \xi}$.
%
%\label{poisson}
%\end{equation}

The dressing operation \eqref{L=Ad(W)d} for $L$ becomes the following
dressing operation for $\calL$:
%\begin{equation}
$    \calL = \exp \bigl( \ad_{\{,\}} X_0 \bigr) \xi$,
%\label{calL=exp(adX)xi}
%\end{equation}
where $X_0 = \symh(X)$ and $\ad_{\{,\}} (f) (g):= \{f,g\}$. 

The principal symbol of the Orlov-Schulman operator is
\begin{equation}
    \calM = \sum_{n=1}^\infty nt_n \calL^{n-1} + x 
    + \alpha_0 \calL^{-1}
    + \sum_{n=1}^\infty v_{0,n} \calL^{-n-1}
\label{calM}
\end{equation}
($v_{0,n}=\symh(v_n)$, $\alpha_0=\symh(\alpha)$), which is equal to
%\begin{equation}
$    \calM = 
    \exp \bigl( \ad_{\{,\}} X_0 \bigr) \allowbreak
    \exp \bigl( \ad_{\{,\}} \alpha_0 \log\xi \bigr) \allowbreak
    \exp \bigl( \ad_{\{,\}} \zeta(t,\xi)\bigr) x$,
%\label{calM=exp(adX)x}
%\end{equation}
where $\zeta(t,\xi) = \sum_{n=1}^\infty t_n \xi^n$. The series $\calM$
satisfies the canonical commutation relation $\{\calL,\calM\}=1$, and
the Lax type equations:
%
%\begin{equation}
$
    \frac{\der\calM}{\der t_n} = \{\calB_n, \calM\}$, 
$n=1,2,\ldots$.
%\label{lax:calM}
%\end{equation}

The Riemann-Hilbert type construction of the solution is essentially the
same as \propref{prop:RH}. %(We do not need to assume the canonical
%commutation relation $\{\calL,\calM\}=1$.)

\begin{prop}
\label{prop:dRH}
(i)
 Let $f_0(x,\xi)$ and $g_0(x,\xi)$ be functions canonically commuting,
 $\{f_0,g_0\}=1$, functions $\calL$ and $\calM$ have the form
 \eqref{calL} and \eqref{calM} respectively.
 Suppose $f_0(\calM,\calL)$ and $g_0(\calM,\calL)$ do not contain
 negative powers of $\xi$, $(f_0(\calM,\calL))_{<0} =
 (g_0(\calM,\calL))_{<0} = 0$, where $(\quad)_{<0}$ is the projection to
 the negative degree part: $P_{<0}:=P-P_{\geq 0}$.
 Then $\calL$ is a solution of the dispersionless KP hierarchy
 \eqref{dkp} and $\calM$ is the corresponding Orlov-Schulman function.

(ii)
 Conversely, for any solution $(\calL,\calM)$ %of the dispersionless KP
 %hierarchy, 
 there exists a pair $(f_0,g_0)$ satisfying the conditions in
 (i). 
\end{prop}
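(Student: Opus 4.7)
The plan is to mirror the proof of Proposition~\ref{prop:RH}, with Poisson brackets replacing commutators throughout.

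For part (i), I would introduce the \emph{defect derivations} $\delta_n := \der_{t_n} - \{\calB_n,\cdot\}$; the goal is to show $\delta_n\calL = \delta_n\calM = 0$, which is precisely the pair of dispersionless Lax equations for $(\calL,\calM)$. First I would check that $\delta_n\calL$ and $\delta_n\calM$ have $\xi$-order at most $-1$. The identity $\{\calL^n,\calL\}=0$ gives $\{\calB_n,\calL\} = -\{(\calL^n)_{<0},\calL\}$, whose $\xi$-order is at most $-1$ by a direct count, and matching with $\der_{t_n}\calL$ from \eqref{calL} (of order $\le -1$) yields $(\delta_n\calL)_{\ge 0}=0$. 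Similarly, the identity $\{\calL^n,\calM\}=n\calL^{n-1}$ (from $\{\calL,\calM\}=1$) gives $\{\calB_n,\calM\} = n\calL^{n-1} - \{(\calL^n)_{<0},\calM\}$, and the polynomial contribution $n\calL^{n-1}$ exactly cancels the explicit term arising from differentiating $nt_n\calL^{n-1}$ in \eqref{calM}, so $(\delta_n\calM)_{\ge 0}=0$ as well.

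Next, because $\delta_n$ is a derivation and $(\calL,\calM)$ is a canonical pair, the chain rule gives
\[
  \delta_n F(\calM,\calL) = F_x(\calM,\calL)\,\delta_n\calM + F_\xi(\calM,\calL)\,\delta_n\calL
\]
for any $F(x,\xi)$. Applying this to $F=f_0$ and $F=g_0$ produces a $2\times 2$ linear system for $(\delta_n\calM,\delta_n\calL)$ whose coefficient determinant is $-\{f_0,g_0\}(\calM,\calL) = -1$, so the system is invertible. The Riemann--Hilbert hypothesis that $f_0(\calM,\calL)$ and $g_0(\calM,\calL)$ are polynomial in $\xi$ makes $\delta_n f_0(\calM,\calL)$ and $\delta_n g_0(\calM,\calL)$ polynomial in $\xi$ as well, since $\der_{t_n}$ preserves polynomiality and the Poisson bracket of two $\xi$-polynomials is again a $\xi$-polynomial. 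A degree-comparison combining the order bounds $(\delta_n\calL)_{\ge 0} = (\delta_n\calM)_{\ge 0} = 0$ with the inverted system then forces $\delta_n\calL = \delta_n\calM = 0$.

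For part (ii), given a solution $(\calL,\calM)$ of the dispersionless KP hierarchy, the relation $\{\calL,\calM\}=1$ makes the map $(x,\xi)\mapsto(\calM|_{t=0},\calL|_{t=0})$ a canonical transformation. I would take $(f_0,g_0)$ to be the components of its inverse, so that by construction $f_0(\calM,\calL)|_{t=0}=\xi$ and $g_0(\calM,\calL)|_{t=0}=x$ (both $\xi$-polynomial), with $\{f_0,g_0\}=1$. The dispersionless Lax equations together with the chain-rule identity $\der_{t_n} f_0(\calM,\calL) = \{\calB_n,f_0(\calM,\calL)\}$ then propagate the polynomial-in-$\xi$ property to all $t$, since Poisson brackets of $\xi$-polynomials remain $\xi$-polynomials.

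The delicate step will be the degree comparison at the end of part (i): extracting $\delta_n\calL=\delta_n\calM=0$ from the inverted $2\times 2$ system requires careful control of the $\xi$-orders of the Jacobian entries $f_{0,x}(\calM,\calL), f_{0,\xi}(\calM,\calL), g_{0,x}(\calM,\calL), g_{0,\xi}(\calM,\calL)$, which are in general formal series in $\xi^{-1}$ with finite polynomial tops rather than pure $\xi$-polynomials.
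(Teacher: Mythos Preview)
The paper does not actually prove Proposition~\ref{prop:dRH}: it is stated as the dispersionless analogue of Proposition~\ref{prop:RH}, both of which are quoted from \cite{tak-tak:95} (Proposition~1.7.11 there for the quantum case, with the remark that the dispersionless construction ``is essentially the same''). So there is no in-text argument to compare against. Your outline is the standard route---the classical limit of the commutator proof---and your part (ii) is fine as written.

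For part (i), however, the step you flag as delicate is a genuine gap rather than mere bookkeeping. Knowing only that the $2\times2$ Jacobian has determinant $-1$ is \emph{not} enough to conclude $\delta_n\calL=\delta_n\calM=0$ from the order bounds: for instance the matrix
\[
  J=\begin{pmatrix}\xi&\xi^{-1}\\ \xi&0\end{pmatrix},\qquad \det J=-1,
\]
sends the purely-negative vector $(\xi^{-1},0)^T$ to the polynomial vector $(1,1)^T$. What rescues the argument is that the actual Jacobian is not arbitrary: its entries can be rewritten as $-\{P,\calL\}$, $\{P,\calM\}$, $-\{Q,\calL\}$, $\{Q,\calM\}$ with $P=f_0(\calM,\calL)$, $Q=g_0(\calM,\calL)$, and the Jacobi identity together with $\{\calL,\calM\}=\{P,Q\}=1$ imposes compatibility constraints that exclude such pathologies. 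One also uses that $\delta_n$ is a derivation of the Poisson bracket, giving the extra relation $\{\delta_n\calL,\calM\}+\{\calL,\delta_n\calM\}=0$ beyond the two equations from $P$ and $Q$. Alternatively---and this is how \cite{tak-tak:91,tak-tak:95} proceed---one first establishes \emph{uniqueness} of the Riemann--Hilbert factorisation in the class \eqref{calL}--\eqref{calM}, then observes (exactly as in your part (ii)) that the dispersionless flow preserves the condition $(f_0(\calM,\calL))_{<0}=(g_0(\calM,\calL))_{<0}=0$; uniqueness then forces $(\calL,\calM)$ to evolve by the flow. Either refinement is needed to close the argument; the bare degree comparison does not suffice.

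A smaller point: your description of the Jacobian entries as ``formal series in $\xi^{-1}$ with finite polynomial tops'' is accurate only after filtering by total degree in the time variables, since $\calM=\sum_{m\ge1}mt_m\calL^{m-1}+\cdots$ has unbounded $\xi$-degree in the full formal series.
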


If $f$, $g$, $L$ and $M$ are as in \propref{prop:RH}, then
$f_0=\symh(f)$, $g_0=\symh(g)$, $\calL=\symh(L)$ and
$\calM=\symh(M)$ satisfy the conditions in \propref{prop:dRH}. In other
words, $(f,g)$ and $(L,M)$ are quantisation of the canonical
transformations $(f_0,g_0)$ and $(\calL,\calM)$ respectively. (See, for
example, \cite{sch:85} for quantised canonical transformations.)

\section{Recursive construction of the dressing operator}
\label{sec:recursion}

In this section we prove that the solution of the KP hierarchy
corresponding to the quantised canonical transformation $(f,g)$ is
recursively constructed from its leading term, i.e., the solution of the
dispersionless KP hierarchy corresponding to the Riemann-Hilbert data
$(\symh(f),\symh(g))$.

Given the pair $(f,g)$, we have to construct the dressing operator $W$,
or $X$ and $\alpha$ in \eqref{W=exp(X)}, such that operators
\begin{equation}
 \begin{aligned}
    f(\hbar,M,L) &= 
    \Ad \left(
          W \exp\left(\hbar^{-1} \zeta(t,\hbar\der) \right)
        \right) f(\hbar,x,\hbar\der)
\\
    g(\hbar,M,L) &= 
    \Ad \left(
          W \exp\left(\hbar^{-1} \zeta(t,\hbar\der) \right)
        \right) g(\hbar,x,\hbar\der)
 \end{aligned}
\label{Ad(W)(f,g)}
\end{equation}
are both differential operators (cf.\ \propref{prop:RH}). Let us expand
$X$ and $\alpha$ with respect to the $\hbar$-order as follows:
\begin{equation}
    X(\hbar,x,t,\hbar\der) 
    = \sum_{n=0}^\infty \hbar^n X_n(x,t,\hbar\der),\quad
    X_n(x,t,\hbar\der) =
    \sum_{k=1}^\infty \chi_{n,k}(x,t) (\hbar\der)^{-k},
\end{equation}
$\alpha(\hbar) = \sum_{n=0}^\infty \hbar^n \alpha_n$,
where $\chi_{n,k}$ and $\alpha_n$ do not depend on $\hbar$.
% and hence
%$\chi_k$ in \eqref{X} is expanded as
%$\chi_k=\sum_{n=0}^\infty\hbar^n\chi_{n,k}$.

Assume that the solution $(\calL,\calM)$ of the dispersionless KP
hierarchy corresponding to $(\symh(f),\symh(g))$ is given. Namely, 
$(\symh(f)(\calM,\calL), \allowbreak \symh(g)(\calM,\calL))$ do not
contain negative powers of $\xi$: 
\begin{equation}
    \bigl(\symh(f)(\calM,\calL)\bigr)_{<0}=
    \bigl(\symh(g)(\calM,\calL)\bigr)_{<0}=0.
\label{sym(f,g)(L,M)-=0}
\end{equation}
Let $(X_0,\alpha_0)$ be corresponding dressing functions.

\commentout{

In other
words, assume that a symbol $X_0=\sum_{k=1}^\infty \chi_{0,k}(x,t)
\xi^{-k}$ and a constant $\alpha_0$ are given such that
\begin{equation*}
 \begin{aligned}
    \symh(f)(\calM,\calL) &=
    \exp \bigl( \ad_{\{,\}} X_0 \bigr)
    \exp \bigl( \ad_{\{,\}} \alpha_0 \log \xi \bigr)
    \exp \bigl( \ad_{\{,\}} \zeta(t,\xi) \bigr)
    \symh(f)(x,\xi)
\\
    \symh(g)(\calM,\calL) &= 
    \exp \bigl( \ad_{\{,\}} X_0 \bigr)
    \exp \bigl( \ad_{\{,\}} \alpha_0 \log \xi \bigr)
    \exp \bigl( \ad_{\{,\}} \zeta(t,\xi) \bigr)
    \symh(g)(x,\xi)
 \end{aligned}
\end{equation*}
do not contain negative powers of $\xi$:
\begin{equation}
    \bigl(\symh(f)(\calM,\calL)\bigr)_{<0}=
    \bigl(\symh(g)(\calM,\calL)\bigr)_{<0}=0.
\label{sym(f,g)(L,M)-=0}
\end{equation}
(See \propref{prop:dRH}.)
}

We are to construct $X_n$ and $\alpha_n$ recursively, starting from
$X_0$ and $\alpha_0$. The explicit procedure is as follows.

\commentout{

For this purpose expand $f(\hbar,L,M)$ and
$g(\hbar,L,M)$ in \eqref{Ad(W)(f,g)} as
\begin{align}
    P&:= 
    \Ad \left( \exp (\hbar^{-1} X) (\hbar\der)^{\alpha/\hbar} \right)
    f_t
    = \sum_{k=0}^\infty \hbar^k P_k(x,t,\hbar\der),
\label{P}
\\
    Q&:= 
    \Ad \left( \exp (\hbar^{-1} X) (\hbar\der)^{\alpha/\hbar} \right)
    g_t
    = \sum_{k=0}^\infty \hbar^k Q_k(x,t,\hbar\der),
\label{Q}
\end{align}
where 
%\begin{equation}
$
    f_t :=
    \Ad \left( e^{\hbar^{-1} \zeta(t,\hbar\der)} \right) f$, % \qquad
$    g_t :=
    \Ad \left( e^{\hbar^{-1} \zeta(t,\hbar\der)} \right) g$,
%\label{ft,gt}
%\end{equation}
and %$P_i$'s and $Q_i$'s are operators of the form
%$P_i=P_i(x,t,\hbar\der)$ and $Q_i=Q_i(x,t,\hbar\der)$, 
$\ordh P_i=\ordh Q_i =0$. 
Suppose that we have chosen X_0,\dots,X_{i-1}$ and 
$\alpha_0,\dots,\alpha_{i-1}$ so that $P_0,\dots,P_{i-1}$ and
$Q_0,\dots,Q_{i-1}$ do not contain negative powers of $\der$.

% If an
%operator $X_i$ and a constant $\alpha_i$ are constructed from these
%given $X_0,\dots,X_{i-1}$ and $\alpha_0,\dots,\alpha_{i-1}$ so that
%resulting $P_i$ and $Q_i$ do not contain negative powers of $\der$, this
%procedure gives recursive construction of $X$ and $\alpha$ in question.

We can construct $X_i$ and $\alpha_i$ recursively such that resulting
$P_i$ and $Q_i$ do not contain negative powers of $\der$ as follows:
}
%(Details and
%meaning shall be explained in the proof of \thmref{thm:recursion:X}.)
\begin{itemize}
 \item (Step 0) Assume $X_0,\dots,X_{i-1}$ and
       $\alpha_0,\dots,\alpha_{i-1}$ are given. Set
%\begin{equation}
    $X^{(i-1)} := \sum_{n=0}^{i-1} \hbar^n X_n$,
    $\alpha^{(i-1)} := \sum_{n=0}^{i-1} \hbar^n \alpha_n$.
%\label{X(i-1),a(i-1)}
%\end{equation}
 \item (Step 1) Set
\begin{align}
    P^{(i-1)} 
    &:=
    \Ad \left( \exp (\hbar^{-1} X^{(i-1)}) 
               (\hbar\der)^{\alpha^{(i-1)}/\hbar} \right)
    f_t,
\label{def:Pi-1}
\\
    Q^{(i-1)}
    &:=
    \Ad \left( \exp (\hbar^{-1} X^{(i-1)}) 
               (\hbar\der)^{\alpha^{(i-1)}/\hbar} \right)
    g_t.
\label{def:Qi-1}
\end{align}
       Expand $P^{(i-1)}$ and $Q^{(i-1)}$ as %with respect to the
%       $\hbar$-order as 
%\begin{align}
$
    P^{(i-1)} = %&=
%    P^{(i-1)}_0 + \hbar P^{(i-1)}_1 + \cdots +
    \sum_{k=0}^\infty\hbar^{k} P^{(i-1)}_k% + \cdots
$,
%\label{Pi-1:h-expand}
%\\
$    Q^{(i-1)} = %&=
%    Q^{(i-1)}_0 + \hbar Q^{(i-1)}_1 + \cdots +
    \allowbreak
    \sum_{k=0}^\infty \hbar^{k} Q^{(i-1)}_k% + \cdots.
$.
%\label{Qi-1:h-expand}
%\end{align}
        ($\ordh P^{(i-1)}_k=\ordh Q^{(i-1)}_k=0$.)
 \item (Step 2) Put $\calP_0:=\symh(P^{(i-1)}_0)$,
       $\calQ_0:=\symh(Q^{(i-1)}_0)$,
       $\calP^{(i-1)}_i:=\symh(P^{(i-1)}_i)$,
       $\calQ^{(i-1)}_i:=\symh(Q^{(i-1)}_i)$ and define a constant
       $\alpha_i$ and a series $\tilde\calX_i(x,t,\xi)=\sum_{k=1}^\infty
       \tilde\chi_{i,k}(x,t)\xi^{-k}$ by
\begin{equation}
    \alpha_i \log\xi + \tilde\calX_i:=
    \int^\xi\left(
    \dfrac{\der \calQ_0}{\der \xi} \calP^{(i-1)}_i
    -
    \dfrac{\der \calP_0}{\der \xi} \calQ^{(i-1)}_i
    \right)_{\leq -1}
    d\xi.
\label{ai+tildeXi=int}
\end{equation}
       The integral constant of the indefinite integral is fixed so that
       the right hand side agrees with the left hand side.
 \item (Step 3) Define a series $\calX_i(x,t,\xi)=\sum_{k=1}^\infty
       \chi_{i,k}(x,t)\xi^{-k}$ by
\begin{equation}
 \begin{split}
    \calX_i
    &= \tilde \calX'_i
    - \frac{1}{2} \{\symh(X_0),\tilde\calX'_i\}
    + \sum_{p=1}^\infty \frac{B_{2p}}{(2p)!}
      (\ad_{\{,\}} (\symh(X_0)))^{2p} \tilde\calX'_i,
\\
    \tilde\calX'_i &:= 
    \alpha_i \log\xi + \tilde\calX_i(x,\xi)
    - \exp(\ad_{\{,\}} \symh(X_0)) (\alpha_i \log\xi).
 \end{split}
\label{tildeXi->Xi:symbol}
\end{equation}
       Here  $B_{2p}$'s are the Bernoulli numbers. 
 \item (Step 4) The operator $X_i(x,t,\hbar\der)$ is defined as the
       operator with the principal symbol $\calX_i$:
%\begin{equation}
    $X_i=\sum_{k=1}^\infty\chi_{i,k}(x,t)(\hbar\der)^{-k}$.
%\label{def:Xi}
%\end{equation}
\end{itemize}

The main theorem is the following:
\begin{thm}
\label{thm:recursion:X}
 Assume that $X_0$ and $\alpha_0$ satisfy \eqref{sym(f,g)(L,M)-=0} and
 construct $X_i$'s and $\alpha_i$'s by the above procedure
 recursively. Then %$X$ and $\alpha$ defined by \eqref{X:h-expansion}
% satisfy \eqref{f(M,L)=g(M,L)=diff}. Namely
 $W=\exp(X/\hbar)(\hbar\der)^{\alpha/\hbar}$ is a dressing operator of
 the $\hbar$-dependent KP hierarchy corresponding to $(f,g)$ by
 \propref{prop:RH}. 
%
%\hfill
%$\square$
\end{thm}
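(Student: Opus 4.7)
The plan is a straightforward induction on $i$: I will show that after Steps 0--4 produce $X_i$ and $\alpha_i$, the updated operators $P^{(i)}$ and $Q^{(i)}$ (defined as in \eqref{def:Pi-1}--\eqref{def:Qi-1} with $i$ replacing $i-1$) have coefficients $P^{(i)}_k$ and $Q^{(i)}_k$ that are differential operators (no negative-order part) for all $k\leq i$. The base case $i=0$ is the hypothesis \eqref{sym(f,g)(L,M)-=0}: by \propref{prop:dRH}, this says the principal symbols of $P^{(0)}_0$ and $Q^{(0)}_0$ are polynomial in $\xi$.

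The inductive step rests on two computations. First, since replacing $X^{(i-1)}$ by $X^{(i)} = X^{(i-1)} + \hbar^i X_i$ perturbs the generator $X/\hbar$ by an operator of $\hbar$-order $\leq -(i-1)$, a Baker--Campbell--Hausdorff expansion shows that $P^{(i)}_k = P^{(i-1)}_k$ and $Q^{(i)}_k = Q^{(i-1)}_k$ for $k < i$. At level $i$ the leading effect of the extra conjugation is, at the level of principal symbols,
\[
    \calP^{(i)}_i = \calP^{(i-1)}_i + \{\calX_i + \alpha_i\log\xi,\, \calP_0\},
\]
together with the analogous equation for $\calQ$. Here one uses that for $\hbar$-order $0$ operators the principal symbol of $\hbar^{-1}[\,\cdot\,,\,\cdot\,]$ is the Poisson bracket, and that $(\hbar\der)^{\hbar^{i-1}\alpha_i}$ contributes, at leading order in the symbol, the Poisson flow generated by $\alpha_i\log\xi$.

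Demanding that the extra conjugation cancel the negative-degree parts forces
\[
    \{\calX_i + \alpha_i\log\xi, \calP_0\} = -(\calP^{(i-1)}_i)_{\leq -1},\qquad
    \{\calX_i + \alpha_i\log\xi, \calQ_0\} = -(\calQ^{(i-1)}_i)_{\leq -1}.
\]
Inverting this $2\times 2$ linear system using $\{\calP_0, \calQ_0\} = 1$ (the symbol of the preserved commutator $[P^{(i-1)}, Q^{(i-1)}] = \hbar$) yields $\der_\xi(\alpha_i\log\xi + \tilde\calX_i) = (\der_\xi\calQ_0\cdot \calP^{(i-1)}_i - \der_\xi\calP_0\cdot \calQ^{(i-1)}_i)_{\leq -1}$, and $\xi$-integration is exactly the definition \eqref{ai+tildeXi=int}. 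The main technical obstacle I expect is verifying integrability: one must check that the antiderivative recovered from the $\xi$-equation also satisfies the $x$-equation, equivalently $\{(\calP^{(i-1)}_i)_{\leq -1}, \calQ_0\} + \{\calP_0, (\calQ^{(i-1)}_i)_{\leq -1}\} = 0$. My plan for this is to extract the $\hbar^{i+1}$-order of the conserved relation $[P^{(i-1)}, Q^{(i-1)}] = \hbar$; by the inductive hypothesis, $\calP^{(i-1)}_k$ and $\calQ^{(i-1)}_k$ are polynomial in $\xi$ for $k<i$, so all contributions at that order except those coming from $\calP^{(i-1)}_i$ and $\calQ^{(i-1)}_i$ are polynomial, and taking the negative-degree part isolates precisely the required consistency identity.

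Finally, Step 3 converts the ``logarithmic increment'' $\alpha_i\log\xi + \tilde\calX_i$ into the symbol $\calX_i$ to be absorbed into the single exponential $\exp(X/\hbar)$. Because $W$ is stored as a product $\exp(X/\hbar)(\hbar\der)^{\alpha/\hbar}$ rather than as a single exponential, the new $\alpha_i$ contribution partially duplicates what $X_i$ is being chosen to supply; the subtraction $\tilde\calX'_i = \alpha_i\log\xi + \tilde\calX_i - \exp(\ad_{\{,\}}\symh(X_0))(\alpha_i\log\xi)$ strips off this duplication, while the Bernoulli-number series is the formal kernel $\ad_{\{,\}}\symh(X_0)/(e^{\ad_{\{,\}}\symh(X_0)}-1)$, realising the inverse of the differential of $Z\mapsto e^Z$---this is the Aoki exponential calculus \cite{aok:86} translating an increment at the ``product-of-exponentials'' level into an increment inside the single exponential. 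Lifting $\calX_i$ to $X_i$ (Step 4) completes the induction, and \propref{prop:RH} then concludes that $W = \exp(X/\hbar)(\hbar\der)^{\alpha/\hbar}$ is a dressing operator of the $\hbar$-dependent KP hierarchy corresponding to $(f,g)$.
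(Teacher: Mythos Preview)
Your approach is essentially the paper's own proof: induction on $i$, Campbell--Hausdorff factorisation of the updated dressing operator, inversion of the $2\times2$ Poisson-bracket system using $\{\calP_0,\calQ_0\}=1$, compatibility extracted from the $\hbar$-order $-(i+1)$ part of $[P^{(i-1)},Q^{(i-1)}]=\hbar$, and the Bernoulli series $\ad/(e^{\ad}-1)$ to pass from $\tilde\calX_i$ back to $\calX_i$. Two small slips to correct: in your first displayed equation the increment inside the Poisson bracket must be $\alpha_i\log\xi+\tilde\calX_i$, not $\alpha_i\log\xi+\calX_i$ (the BCH correction is exactly the discrepancy, as your own account of Step~3 recognises); and the compatibility identity should read $\bigl(\{\calP^{(i-1)}_i,\calQ_0\}+\{\calP_0,\calQ^{(i-1)}_i\}\bigr)_{\leq -1}=0$ with the truncation outside the brackets, which is in fact what your commutator argument proves.
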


\commentout{
\bigskip
The rest of this section is the proof of \thmref{thm:recursion:X} by
induction.

Let us denote the ``known'' part of $X$ and $\alpha$ by $X^{(i-1)}$ and
$\alpha^{(i-1)}$ as in \eqref{X(i-1),a(i-1)} and, as intermediate
objects, consider $P^{(i-1)}$ and $Q^{(i-1)}$ defined by
\eqref{def:Pi-1} and \eqref{def:Qi-1}, which are expanded as
\eqref{Pi-1:h-expand} and \eqref{Qi-1:h-expand}.

If $X$ and $\alpha$ are expanded as \eqref{X:h-expansion} and
\eqref{alpha:h-expansion}, the dressing operator
$W=\exp(X/\hbar)\exp(\alpha\log(\hbar\der)/\hbar)$ is factored as
follows by the Campbell-Hausdorff theorem:
\begin{multline}
    W=
    \exp\left(
     \hbar^{i-1} (\alpha_i \log(\hbar\der) + \tilde X_i) + 
     \hbar^i X_{>i}
    \right) \times
\\
    \times
    \exp\left( \hbar^{-1} X^{(i-1)} \right) 
    \exp\bigl( \hbar^{-1} \alpha^{(i-1)} \log(\hbar\der)\bigr),
\label{W=exp(Xi)exp(Xi-1)}
\end{multline}
where $\ordh(\alpha_i\log(\hbar\der)+\tilde X_i(x,\hbar\der))=0$,
$\ordh(X_{>i})\leqq 0$ and the principal symbol of
$\alpha_i\log(\hbar\der)+\tilde X_i(x,\hbar\der)$ is defined by
\begin{multline}
    \symh(\alpha_i\log(\hbar\der)+\tilde X_i){}(x,\xi)
\\
    =
    \sum_{n=1}^\infty \frac{(\ad_{\{,\}} \symh(X_0) )^{n-1}}{n!}
    \symh(X_i)
    +
    \exp\bigl(\ad_{\{,\}} \symh(X_0)\bigr) (\alpha_i \log\xi).
\label{def:tildeX}
\end{multline}
Note that the only log term in \eqref{def:tildeX} is $\alpha_i \log\xi$
and the rest is sum of negative powers of $\xi$. The principal symbol of
$X_i$ is recovered from $\tilde X_i$ by the formula
\begin{equation}
 \begin{split}
    \symh(X_i) 
    &= \symh(\tilde X'_i)
    - \frac{1}{2} \{\symh(X_0),\symh(\tilde X'_i)\}
    + \sum_{p=1}^\infty \frac{B_{2p}}{(2p)!}
      (\ad_{\{,\}} (\symh(X_0)))^{2p} \symh(\tilde X'_i), 
\\
    \symh(\tilde X'_i) &:= 
    \symh(\tilde X_i)(x,\xi)
    - \exp(\ad_{\{,\}} \symh(X_0)) (\alpha_i \log\xi)
\\
    &=
    \sum_{n=1}^\infty \frac{(\ad_{\{,\}} \symh(X_0) )^{n-1}}{n!}
    \symh(X_i).
 \end{split}
\label{tildeXi->Xi}
\end{equation}
Here coefficients $\frac{B_{2p}}{(2p)!}$ are defined by \eqref{def:K2p}. This
inversion relation is the origin of \eqref{tildeXi->Xi:symbol}. (Note
that the principal symbol determines the operator $X_i$, since it is a
homogeneous term in the expansion \eqref{X:h-expansion}.) We prove
formulae \eqref{W=exp(Xi)exp(Xi-1)} and \eqref{tildeXi->Xi} in
\appref{app:proof-campbell-hausdorff}.

The factorisation \eqref{W=exp(Xi)exp(Xi-1)} implies
\begin{equation*}
 \begin{split}
    P =& 
    \Ad\Bigl( \exp\bigl(
      \hbar^{i-1} (\alpha_i\log(\hbar\der)+\tilde X_i)
    + \hbar^i     X_{>i}
    \bigr) \Bigr) P^{(i-1)}
\\
    =&
    P^{(i-1)} 
    + \hbar^{i-1}
    [(\alpha_i\log(\hbar\der)+\tilde X_i)+\hbar X_{>i}, P^{(i-1)}]
    + (\text{terms of $\hbar$-order $<-i$}).
 \end{split}
\end{equation*}
Thus, substituting the expansion \eqref{Pi-1:h-expand} in the step 1, we
have
\begin{equation}
 \begin{split}
    P =& 
    P^{(i-1)}_0 + \hbar P^{(i-1)}_1 + \cdots + \hbar^i P^{(i-1)}_i +
    \cdots
\\
    &+ \hbar^{i-1} [\alpha_i\log(\hbar\der)+\tilde X_i,
         P^{(i-1)}_0]
\\
    &+ (\text{terms of $\hbar$-order $<-i$}).
 \end{split}
\label{P=Ad()Pi-1}
\end{equation}
Comparing this with the $\hbar$-expansion of $P$ \eqref{P}, we can
express $P_i$'s in terms of $P^{(i-1)}_j$, $\tilde X_i$ and $\alpha_i$
as follows:
\begin{align}
    P_j &= P^{(i-1)}_j \qquad  (j=0,\dots,i-1),
\label{P0-Pi-1}
\\
    \sigma_0 (P_i) &= \sigma_0 (P^{(i-1)}_i 
    + h^{-1}[\alpha_i\log(\hbar\der)+\tilde X_i, P^{(i-1)}_0]).
\label{Pi<-Pi-1i}
\end{align}
Similar equations for $Q$ are obtained in the same way. The first
equations \eqref{P0-Pi-1} show that the terms of $\hbar$-order greater
than $-i$ in \eqref{P} are already fixed by $X_0,\dots,X_{i-1}$ and
$\alpha_0,\dots,\alpha_{i-1}$, which justifies the inductive
procedure. That is to say, we are assuming that $X_0,\dots,X_{i-1}$ and
$\alpha_0,\dots,\alpha_{i-1}$ have been already determined so that
$P_j=P^{(i-1)}_j$ and $Q_j=Q^{(i-1)}_j$ for $j=0,\dots,i-1$ are
differential operators.

The operator $X_i$ and constant $\alpha_i$ should be chosen so that the
right hand side of \eqref{Pi<-Pi-1i} and the corresponding expression
for $Q$ are differential operators. Taking equations $P^{(i-1)}_0=P_0$
and $Q^{(i-1)}_0=Q_0$ into account, we define
\begin{equation}
 \begin{split}
    \tilde P^{(i)}_i &:= P^{(i-1)}_i 
    + \hbar^{-1} [\alpha_i\log(\hbar\der)+\tilde X_i, P_0],
\\
    \tilde Q^{(i)}_i &:= Q^{(i-1)}_i 
    + \hbar^{-1} [ \alpha_i\log(\hbar\der)+\tilde X_i, Q_0].
 \end{split}
\label{Pii,Qii}
\end{equation}
Then the condition for $X_i$ and $\alpha_i$ is written in the following
form of equations for symbols:
\begin{equation}
    (\symh_0(\tilde P^{(i)}_i))_{\leq -1} = 0, \qquad
    (\symh_0(\tilde Q^{(i)}_i))_{\leq -1} = 0.
\label{Pii,Qii:diff-op}
\end{equation}
(The parts of $\hbar$-order less than $-1$ should be determined in the
next step of the induction.) To simplify notations, we denote the
symbols $\symh_0(\tilde P^{(i)}_i)$, $\symh_0(P^{(i-1)}_i)$ and so on by
the corresponding calligraphic letters as $\calP^{(i)}_i$,
$\calP^{(i-1)}_i$ etc. By this notation we can rewrite the equations
\eqref{Pii,Qii:diff-op} in the following form:
\begin{equation}
 \begin{aligned}
    (\tilde\calP^{(i)}_i)_{\leq -1} &= 0, \qquad&
    \tilde\calP^{(i)}_i &:= \calP^{(i-1)}_i +
    \{ \alpha_i \log\xi + \tilde\calX_i, \calP_0\},
\\
    (\tilde\calQ^{(i)}_i)_{\leq -1} &= 0, \qquad&
    \tilde\calQ^{(i)}_i &:= \calQ^{(i-1)}_i +
    \{ \alpha_i \log\xi + \tilde\calX_i, \calQ_0\}.
  \end{aligned}
\label{Pii,Qii:symbol}
\end{equation}
The above definitions of $\tilde\calP^{(i)}_i$ and $\tilde\calQ^{(i)}_i$
are written in the matrix form:
\begin{equation}
    \begin{pmatrix}
    \dfrac{\der \calP_0}{\der x} & - \dfrac{\der \calP_0}{\der \xi} \\ \\
    \dfrac{\der \calQ_0}{\der x} & - \dfrac{\der \calQ_0}{\der \xi} 
    \end{pmatrix}
    \begin{pmatrix}
    \dfrac{\der}{\der \xi} (\alpha_i \log\xi + \tilde\calX_i) \\ \\
    \dfrac{\der}{\der x}   (\alpha_i \log\xi + \tilde\calX_i)
    \end{pmatrix}
    =
    \begin{pmatrix}
    \tilde\calP^{(i)}_i - \calP^{(i-1)}_i \\ \\
    \tilde\calQ^{(i)}_i - \calQ^{(i-1)}_i
    \end{pmatrix}.
\label{Pii,Qii:mat}
\end{equation}
Recall that operators $P^{(i-1)}$ and $Q^{(i-1)}$ are defined by acting
adjoint operation to the canonically commuting pair $(f,g)$ in
\eqref{def:Pi-1}, \eqref{def:Qi-1} and \eqref{ft,gt}. Hence they also
satisfy the canonical commutation relation:
$[P^{(i-1)},Q^{(i-1)}]=\hbar$. The principal symbol of this relation
gives
\[
   \{\calP^{(i-1)}_0,\calQ^{(i-1)}_0\}=\{\calP_0,\calQ_0\}=1,
\]
which means that the determinant of the matrix in the left hand side of
\eqref{Pii,Qii:mat} is equal to $1$. Hence its inverse matrix is easily
computed and we have
\begin{equation}
    \begin{pmatrix}
    \dfrac{\der}{\der \xi} (\alpha_i \log\xi + \tilde\calX_i) \\ \\
    \dfrac{\der}{\der x}   (\alpha_i \log\xi + \tilde\calX_i)
    \end{pmatrix}
    =
    \begin{pmatrix}
    - \dfrac{\der \calQ_0}{\der \xi}  & \dfrac{\der \calP_0}{\der \xi}
    \\ \\
    - \dfrac{\der \calQ_0}{\der x}    & \dfrac{\der \calP_0}{\der x}
    \end{pmatrix}
    \begin{pmatrix}
    \tilde\calP^{(i)}_i - \calP^{(i-1)}_i \\ \\ 
    \tilde\calQ^{(i)}_i - \calQ^{(i-1)}_i
    \end{pmatrix}.
\label{alpha,X:mat}
\end{equation}
We are assuming that $\calP_0$ and $\calQ_0$ do not contain negative
powers of $\xi$ and we are searching for $\alpha_i\log\xi+\tilde\calX_i$
such that $\tilde\calP^{(i)}_i$ and $\tilde\calQ^{(i)}_i$ are series of
$\xi$ 
without negative powers. Since $\alpha_i$ is constant with respect to
$x$, the left hand side of \eqref{alpha,X:mat} contain only negative
powers of $\xi$. Thus taking the negative power parts of the both hand
sides in \eqref{alpha,X:mat}, we have
\begin{equation}
    \begin{pmatrix}
    \dfrac{\der}{\der \xi} (\alpha_i \log\xi + \tilde\calX_i) \\ \\
    \dfrac{\der}{\der x}   (\alpha_i \log\xi + \tilde\calX_i)
    \end{pmatrix}
    =
    \begin{pmatrix}
    \left(
    \dfrac{\der \calQ_0}{\der \xi} \calP^{(i-1)}_i
    -
    \dfrac{\der \calP_0}{\der \xi} \calQ^{(i-1)}_i
    \right)_{\leq -1}
    \\ \\
    \left(
    \dfrac{\der \calQ_0}{\der x}   \calP^{(i-1)}_i
    -
    \dfrac{\der \calP_0}{\der x}   \calQ^{(i-1)}_i
    \right)_{\leq -1}
    \end{pmatrix}.
\label{dXi/dxi,dXi/dx}
\end{equation}
This is the equation which determines $\alpha_i$ and $X_i$. 

The system \eqref{dXi/dxi,dXi/dx} is solvable thanks to
\lemref{lem:compatibility} below. Hence, integrating the first element
of the right hand side with respect to $\xi$, we obtain $\alpha_i
\log\xi + \tilde\calX_i$. This is Step 2, \eqref{ai+tildeXi=int}. In the
end, the principal symbol of $X_i$ is determined by \eqref{tildeXi->Xi}
or \eqref{tildeXi->Xi:symbol} in Step 3 and thus $X_i$ is defined as in
Step 4. This completes the construction of $X_i$ and $\alpha_i$ and the
proof of the theorem.
%
%\hfill
%$\square$

\begin{lem}
\label{lem:compatibility}
 The system \eqref{dXi/dxi,dXi/dx} is compatible.
\end{lem}

\begin{proof}
 We check:
\begin{equation}
 \begin{split}
    &\frac{\der}{\der x}
    \left(
    \dfrac{\der \calQ_0}{\der \xi} \calP^{(i-1)}_i
    -
    \dfrac{\der \calP_0}{\der \xi} \calQ^{(i-1)}_i
    \right)_{\leq -1}
\\
    =&
    \frac{\der}{\der \xi}
    \left(
    \dfrac{\der \calQ_0}{\der x}   \calP^{(i-1)}_i
    -
    \dfrac{\der \calP_0}{\der x}   \calQ^{(i-1)}_i
    \right)_{\leq -1}.
 \end{split}
\label{compatibility}
\end{equation}
Since differentiation commutes with truncation of power series, condition
\eqref{compatibility} is equivalent to saying that the negative power
part of the following is zero:
\begin{equation}
 \begin{split}
    &\frac{\der}{\der x}
    \left(
    \dfrac{\der \calQ_0}{\der \xi} \calP^{(i-1)}_i
    -
    \dfrac{\der \calP_0}{\der \xi} \calQ^{(i-1)}_i
    \right)
    -
    \frac{\der}{\der \xi}
    \left(
    \dfrac{\der \calQ_0}{\der x}   \calP^{(i-1)}_i
    -
    \dfrac{\der \calP_0}{\der x}   \calQ^{(i-1)}_i
    \right)
\\
    =&
    \dfrac{\der \calQ_0}{\der \xi} \frac{\der \calP^{(i-1)}_i}{\der x}
    -
    \dfrac{\der \calP_0}{\der \xi} \frac{\der \calQ^{(i-1)}_i}{\der x}
    -
    \dfrac{\der \calQ_0}{\der x}   \frac{\der \calP^{(i-1)}_i}{\der \xi}
    +
    \dfrac{\der \calP_0}{\der x}   \frac{\der \calQ^{(i-1)}_i}{\der \xi}
\\  
    =& - \{ \calP^{(i-1)}_i, \calQ_0 \}
       - \{ \calP_0, \calQ^{(i-1)}_i \}.
 \end{split}
\label{compatibility2}
\end{equation}
Defined from canonically commuting pair $(f,g)$ by adjoint action
\eqref{def:Pi-1} and \eqref{def:Qi-1}, the pair of operators $(P^{(i-1)},
Q^{(i-1)})$ is canonically commuting: $[P^{(i-1)},Q^{(i-1)}]=\hbar$. The
negative order part of this relation is zero. On the other hand,
substituting the expansions $P^{(i-1)}=\sum_{n=0}^\infty \hbar^n
P^{(i-1)}_n$ and $Q^{(i-1)}=\sum_{n=0}^\infty \hbar^n Q^{(i-1)}_n$ in
this canonical commutation relation and noting that $P^{(i-1)}_j$ and
$Q^{(i-1)}_j$ ($j=0,\dots,i-1$) do not contain negative order part by
the induction hypothesis, we have
\begin{equation*}
 \begin{split}
    0 &= ([P^{(i-1)}, Q^{(i-1)}])_{\leq -1}
\\
    &=
    [\hbar^i P^{(i-1)}_i, Q^{(i-1)}_0] +
    [P^{(i-1)}_0, \hbar^i Q^{(i-1)}_i] +
    (\text{terms of $\hbar$-order $<-i-1$}).
 \end{split}
\end{equation*}
Taking the symbol of $\hbar$-order $-i-1$ of this equation, we have
\begin{equation*}
    0 = \{ \calP^{(i-1)}_i, \calQ_0 \}
      + \{ \calP_0, \calQ^{(i-1)}_i\},
\end{equation*}
which proves that \eqref{compatibility2} vanishes.
\end{proof}
}

\section{Asymptotics of the wave function}
\label{sec:wave-function}

In this section we prove that the dressing operator of the form
%\begin{equation}
$W(\hbar,x,t, \hbar\der) = \exp(X(\hbar,x,\hbar\der)/\hbar)$,
$\ordh X \leqq 0$, $\ord X \leqq -1$,
%\label{dressing}
%\end{equation}
gives a wave function of the form
%\begin{gather}
$    \Psi(\hbar,x,t;z) 
    = W e^{(xz + \zeta(t,z))/\hbar} = \exp(S(\hbar,x,t,z)/\hbar)$,
%\label{wave-func}
%\\
$    S(\hbar,x,t;z)
    = \sum_{n=0}^\infty \hbar^n S_n(x,t;z) + \zeta(t,z)$,
$\zeta(t,z) := \sum_{n=1}^\infty t_n z^n$,
%\label{S}
%\end{gather}
and vice versa. 

Since the time variables $t_n$ do not play any role in this section,
we set them to zero. As the factor $(\hbar\der)^{\alpha/\hbar}$ in
\eqref{W=exp(X)} becomes a constant factor $z^{\alpha/\hbar}$ when it is
applied to $e^{xz/\hbar}$, we also omit it here.

Let $A(\hbar,x,\hbar\der)=\sum_n a_n(\hbar,x) (\hbar\der)^n$ be a
microdifferential operator. The {\em total symbol} of $A$ is a power
series of $\xi$ defined by
%\begin{equation}
$
    \totsym(A)(\hbar,x,\xi):= \sum_n a_n(\hbar,x) \xi^n
$,
%\label{tot-symbol}
%\end{equation}
%Actually, this is the factor which appears when the operator $A$ is
%applied to $e^{xz/\hbar}$:
or, equivalently defined by the formula
% when the operator $A$ is
%applied to $e^{xz/\hbar}$:
$A e^{xz/\hbar} = \totsym(A)(\hbar,x,z) e^{xz/\hbar}$.
%\begin{equation}
%    
%\label{tot-symbol:Laplace}
%\end{equation}
%Using this terminology, what we show in this section is that a operator
%of the form $e^{X/\hbar}$ has a total symbol of the form $e^{S/\hbar}$
%and that an operator with total symbol $e^{S/\hbar}$ has a form
%$e^{X/\hbar}$. Exactly speaking, the main results in this section are
%the following two propositions.
\begin{prop}
\label{prop:exp(:X:)=:exp(S):}
 Let $X=X(\hbar,x,\hbar\der)$ be a microdifferential operator such that
 $\ord X=-1$ and $\ordh X = 0$. Then the total symbol of $e^{X/\hbar}$
 has such a form as
%\begin{equation}
$    \totsym(\exp(\hbar^{-1} X(\hbar,x,\hbar\der)))
    = e^{S(\hbar,x,\xi)/\hbar}$,
%\label{exp(:X:)=:exp(S):}
%\end{equation}
 where $S(\hbar,x,\xi)$ is a power series of $\xi^{-1}$ without
 non-negative powers of $\xi$ and has an $\hbar$-expansion
%\begin{equation}
$    S(\hbar,x,\xi)=\sum_{n=0}^\infty \hbar^n S_n(x,\xi)$.
%\label{S:h-expansion}
%\end{equation}

 Moreover, the coefficient $S_n$ is determined by $X_0,\dots,X_n$ in the
 $\hbar$-expansion %\eqref{X:h-expansion} 
 of $X=\sum_{n=0}^\infty \hbar^n X_n$.
\end{prop}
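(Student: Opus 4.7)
The plan is to compute $\totsym(e^{X/\hbar})$ via the Leibniz composition formula for total symbols,
\[
  \totsym(AB)(\hbar,x,\xi) = \sum_{k\geq 0}\frac{\hbar^k}{k!}\,\der_\xi^k \totsym(A)\cdot\der_x^k \totsym(B),
\]
and to verify that the result has the asserted WKB form. I would introduce the one-parameter family $F(s) := \totsym(e^{sX/\hbar})$ with $F(0) = 1$, which satisfies
\[
  \der_s F = \frac{1}{\hbar}\sum_{k\geq 0}\frac{\hbar^k}{k!}\,\der_\xi^k \totsym(X)\cdot\der_x^k F,
\]
and then impose the ansatz $F = e^{T/\hbar}$ with $T(0) = 0$, setting $S := T(1)$ at the end.

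Eliminating $F$ in favor of $T$ via Fa\`a di Bruno's formula gives
\[
  \frac{\der_x^k F}{F} = \sum_{\lambda \vdash k}c_\lambda\,\hbar^{-\ell(\lambda)}\prod_j(\der_x^j T)^{\lambda_j},
\]
where the sum runs over partitions of $k$ with $\lambda_j$ parts of size $j$, $\ell(\lambda) := \sum_j\lambda_j$ is the number of parts, and $c_\lambda := k!/\prod_j((j!)^{\lambda_j}\lambda_j!)$. Substituting and multiplying through by $\hbar$ turns the evolution into
\[
  \der_s T = \sum_{k\geq 0}\sum_{\lambda\vdash k}\frac{c_\lambda}{k!}\,\hbar^{k - \ell(\lambda)}\,\der_\xi^k\totsym(X)\prod_j(\der_x^j T)^{\lambda_j}.
\]
Since $\ell(\lambda)\leq|\lambda|=k$, every exponent of $\hbar$ here is non-negative; combined with the expansion $\totsym(X)=\sum_{n\geq 0}\hbar^n\calX_n$ where $\calX_n := \symh(X_n) = \sum_k\chi_{n,k}(x)\xi^{-k}$, this shows the equation is a formal power series in $\hbar$ with polynomial dependence on $T$ and its $\der_x$-derivatives.

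The ansatz $T = \sum_{n\geq 0}\hbar^n T_n(s;x,\xi)$ is therefore self-consistent, and matching $\hbar^n$-coefficients produces, for each $n$, a first-order transport equation in $(s,x)$ whose inhomogeneous part depends only on $\calX_0,\ldots,\calX_n$ and the previously determined $T_0,\ldots,T_{n-1}$. The initial condition $T_n(0;x,\xi) = 0$ then determines $T_n$ uniquely by integration along characteristics; the base case is simply $T_0(s) = s\,\calX_0$, yielding $S_0 = \symh(X_0)$. Setting $S_n := T_n(1;x,\xi)$, induction on $n$ — using that $\der_\xi^k$ and multiplication preserve the class of power series in $\xi^{-1}$ with no non-negative powers — gives the asserted expansion of $S$ and the claimed dependence of $S_n$ on $X_0,\ldots,X_n$.

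The main obstacle is the $\hbar$-bookkeeping behind the crucial inequality $\ell(\lambda)\leq k$: one must verify carefully that the Moyal prefactor $\hbar^k$ exactly cancels the worst $\hbar^{-k}$ produced by $\der_x^k F/F$ when $F = e^{T/\hbar}$, with no surplus. Convergence of the infinite Moyal sum in the variable $\xi^{-1}$ is automatic, since $\der_\xi^k\totsym(X)$ has $\xi$-degree at most $-k-1$, so only finitely many $k$ contribute to each coefficient of $\xi^{-m}$ in $\der_s T$.
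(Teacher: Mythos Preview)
Your overall strategy is sound and is in fact somewhat cleaner than the paper's. The paper, following Aoki's exponential calculus, writes $X\circ e^{S(s)/\hbar}$ via a $t$-deformed product and is led to auxiliary four-variable quantities $Y^{(l)}_{k,m}(x,y,\xi,\eta)$ obeying a coupled recursion together with $S^{(l)}_{n}$; your direct use of Fa\`a di Bruno on $\der_{x}^{k}e^{T/\hbar}$ bypasses those auxiliaries. The observation that every term carries $\hbar^{k-\ell(\lambda)}$ with $\ell(\lambda)\leq k$ is exactly the mechanism that makes the equation for $T$ regular in $\hbar$.

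There is, however, a concrete error at leading order. Extracting the $\hbar^{0}$ part of your evolution equation does \emph{not} give $\der_{s}T_{0}=\calX_{0}$: for every $k\geq 1$ the partition $\lambda=(1^{k})$ has $k-\ell(\lambda)=0$ and contributes, so what you actually obtain is the Hamilton--Jacobi equation
\[
   \der_{s}T_{0}
   \;=\;\sum_{k\geq 0}\frac{1}{k!}\,\der_{\xi}^{k}\calX_{0}\,(\der_{x}T_{0})^{k}
   \;=\;\calX_{0}\bigl(x,\,\xi+\der_{x}T_{0}\bigr).
\]
Hence $T_{0}(s)\neq s\,\calX_{0}$ and $S_{0}\neq\symh(X_{0})$ in general; the paper's own example $X=x(\hbar\der)^{-1}$ has $\symh(X_{0})=x\xi^{-1}$ but $S_{0}=x\xi^{-1}-\tfrac12 x\xi^{-3}+\tfrac12 x\xi^{-5}-\cdots$. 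Consequently the equations for $T_{n}$ with $n\geq 1$ are linear transport equations whose coefficient of $\der_{x}T_{n}$ is $(\der_{\xi}\calX_{0})(x,\xi+\der_{x}T_{0})$, not a constant, and the ``base case'' must be treated as a genuine nonlinear PDE.

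A second gap is the passage to $s=1$. You check that the $k$-sum is finite at each $\xi$-degree, but you give no argument that the $s$-power-series solution of the nonlinear equation for $T_{0}$ (and then of the transport equations for $T_{n}$) makes sense at $s=1$ as a formal series in $\xi^{-1}$. This is the real work in the paper's proof: expanding $T_{n}(s)=\sum_{l\geq 1}S^{(l)}_{n}s^{l}$, one proves the order estimate $\ordx S^{(l)}_{n}\leq -l$, which guarantees $\xi$-adic convergence of $\sum_{l}S^{(l)}_{n}$. In your framework the same estimate follows by induction on $l$ from the Hamilton--Jacobi equation (using $\ordx\calX_{0}\leq -1$ and $\ordx\der_{\xi}^{k}\calX_{0}\leq -k-1$), but it must actually be carried out; ``integration along characteristics'' is not the right device in this purely formal setting.
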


We omit the explicit formula for $S_n$. (See \cite{tak-tak:09}.)
\commentout{
Explicitly, $S_n$ is determined as follows:
\begin{itemize}
 \item (Step 0) Assume that $X_0,\dots,X_n$ are given. Let $X_i(x,\xi)$
       be the total symbol $\totsym(X_i(x,\hbar\der))$.

 \item (Step 1) Define $Y^{(l)}_{k,m}(x,y,\xi,\eta)$ and
       $S^{(l)}(x,\xi)$ by the following recursion relations:
       $Y^{(l)}_{k,-1}=0$, $S^{(0)}_m=0$, $Y^{(l)}_{0,m}(x,y,\xi,\eta) =
       \delta_{l,0} X_m(x,\xi)$
\commentout{

\begin{align}
    &Y^{(l)}_{k,-1}=0
\\
    &S^{(0)}_m=0,
\\
    &Y^{(l)}_{0,m}(x,y,\xi,\eta) = \delta_{l,0} X_m(x,\xi)
\label{Yl0m}
\end{align}
}
for $l\geqq 0$, $m=0,\dots,n$, and
\begin{multline}
    Y^{(l)}_{k+1,m}(x,y,\xi,\eta)
\\    =
    \frac{1}{k+1}
    \left(
    \der_\xi \der_y Y^{(l)}_{k,m-1}(x,y,\xi,\eta)
    +
    \sum_{\substack{0\leq l' \leq l-1 \\ 0\leq m' \leq m}}
    \der_\xi Y^{(l')}_{k,m'}(x,y,\xi,\eta)
    \der_y S^{(l-l')}_{m-m'}(y,\eta)
    \right)
\label{recursion:Y(l,k,m)}
\end{multline}
       for $k\geqq 0$, and
%\begin{equation}
$    S^{(l+1)}_m(x,\xi)
    = \frac{1}{l+1} \sum_{k=0}^{l+m} Y^{(l)}_{k,m}(x,x,\xi,\xi)$.
%\label{recursion:S(l+1,m)}
%\end{equation}
%       (We shall prove that $Y^{(l)}_{k,m}=0$ if $k>l+m$.) 

\commentout{

Schematically
       this procedure goes as follows:
{\small
\begin{equation*}
 \begin{matrix}
                     &         & Y^{(l)}_{0,0}=\delta_{l,0}X_0
                     &         & Y^{(l)}_{0,1}=\delta_{l,0}X_1
                     &         & Y^{(l)}_{0,2}=\delta_{l,0}X_2
\\
                     &         & +
                     &\searrow & +
                     &\searrow & +
\\
    Y^{(l)}_{k,-1}=0 &\to      & Y^{(l)}_{k,0}
                     &\to      & Y^{(l)}_{k,1}
                     &\to      & Y^{(l)}_{k,2} & \cdots
\\
                     &         & \downarrow
                     &\nearrow & \downarrow
                     &\nearrow & \downarrow
\\
                     &         & S^{(l+1)}_0
                     &         & S^{(l+1)}_1
                     &         & S^{(l+1)}_2
 \end{matrix}
\end{equation*}

}
}

 \item (Step 2) $S_n(x,\xi)= \sum_{l=1}^\infty S^{(l)}_n(x,\xi)$. (The
       sum makes sense as a power series of $\xi$.)
\end{itemize}
}

\begin{prop}
\label{prop::exp(S):=exp(:X:)}
 Let $S=\sum_{n=0}^\infty \hbar^n S_n$ be a power series of $\xi^{-1}$
 without non-negative powers of $\xi$. Then there exists a
 microdifferential operator $X(\hbar,x,\hbar\der)$ such that $\ord X
 \leqq -1$, $\ordh X \leqq 0$ and
%\begin{equation}
$
    \totsym(\exp(\hbar^{-1} X(\hbar,x,\hbar\der)))
    = e^{S(\hbar,x,\xi)/\hbar}$.
%\label{:exp(S):=exp(:X:)}
%\end{equation}
 Moreover, the coefficient $X_n(x,\xi)$ in the $\hbar$-expansion
 $X=\sum_{n=0}^\infty \hbar^n X_n$ of the total symbol
 $X=X(\hbar,x,\xi)$ is determined by $S_0,\dots,S_n$ in the
 $\hbar$-expansion of $S$.
 
\end{prop}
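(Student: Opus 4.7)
The plan is to invert the construction of Proposition \ref{prop:exp(:X:)=:exp(S):} order by order in $\hbar$. That proposition defines a map $\Phi$ sending a sequence $(X_0, X_1, \ldots)$ of symbols (formal power series in $\xi^{-1}$ with only negative powers of $\xi$) to a sequence $(S_0, S_1, \ldots)$ of the same type, and the task is to show $\Phi$ is bijective on this space.

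First I would establish the key triangular structure of $\Phi$: for each $n$,
\[
S_n = L_{X_0}(X_n) + R_n(X_0, \ldots, X_{n-1}),
\]
where $R_n$ is a universal polynomial expression in the lower-order $X_k$ and their $x$- and $\xi$-derivatives, and $L_{X_0} = \mathrm{id} + D_{X_0}$ is a linear operator in which $D_{X_0}$ strictly raises the $\xi^{-1}$-degree. This form can be read off the recursion behind Proposition \ref{prop:exp(:X:)=:exp(S):}: the contribution $S^{(1)}_n = X_n$ supplies the identity piece, while for $l \geq 2$ each $S^{(l)}_n$ contributes $X_n$-linear terms of the shape $\der_\xi X_0 \cdot \der_x X_n$ (or iterated variants). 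Because $\ord X_0 \leq -1$ forces $\der_\xi X_0$ to have $\ord \leq -2$, such products have $\xi$-degree strictly below that of $X_n$. Consequently $L_{X_0}$ is formally invertible on the space of power series in $\xi^{-1}$ without nonnegative powers of $\xi$, by a Neumann (geometric) series $L_{X_0}^{-1} = \sum_{j \geq 0} (-D_{X_0})^j$; the sum is well-defined termwise since only finitely many $j$ contribute to each fixed power of $\xi^{-1}$.

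Granted this, the inductive construction of $X$ is immediate. Set $X_0 := S_0$, and for $n \geq 1$ define
\[
X_n := L_{X_0}^{-1}\bigl( S_n - R_n(X_0, \ldots, X_{n-1}) \bigr).
\]
Each $X_n$ is again a power series in $\xi^{-1}$ without nonnegative powers, since $L_{X_0}^{-1}$ preserves this property; hence $X = \sum_n \hbar^n X_n$ satisfies $\ord X \leq -1$ and $\ordh X \leq 0$. Applying Proposition \ref{prop:exp(:X:)=:exp(S):} to this $X$ then reproduces the given $S$ on the nose, completing the proof.

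The main obstacle is the explicit identification of $L_{X_0}$ and $R_n$ inside the recursion of Proposition \ref{prop:exp(:X:)=:exp(S):}, together with a clean verification that $D_{X_0}$ strictly raises the $\xi^{-1}$-degree. This amounts to a careful inspection of the formula from \cite{tak-tak:09}: one must track how each composition of microdifferential operators of strictly negative order contributes a factor of $\hbar$ accompanied by a $\xi$-derivative on one factor and an $x$-derivative on the other, hence strictly lowers the leading $\xi$-power whenever it raises the $\hbar$-order.
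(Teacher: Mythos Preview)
Your triangular scheme for $n\geq 1$ is essentially sound and parallels the paper's argument, but the base step ``Set $X_0:=S_0$'' is incorrect. The map $\Phi$ at level $n=0$ is \emph{nonlinear} in $X_0$: already the second term in the recursion of Proposition~\ref{prop:exp(:X:)=:exp(S):} gives
\[
S_0 \;=\; X_0 \;+\; \tfrac{1}{2}\,\der_\xi X_0\,\der_x X_0 \;+\; (\text{higher }\xi^{-1}\text{-degree terms in }X_0),
\]
which is visibly not the identity. The paper's own example $X=x(\hbar\der)^{-1}$ confirms this: there $X_0=x\xi^{-1}$ but $S_0=x\xi^{-1}-\tfrac{1}{2}x\xi^{-3}+\cdots$, and indeed $\tfrac{1}{2}\der_\xi(x\xi^{-1})\der_x(x\xi^{-1})=-\tfrac{1}{2}x\xi^{-3}$. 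In your notation the equation at $n=0$ is $S_0=L_{X_0}(X_0)$, and since $L_{X_0}$ depends on $X_0$ this cannot be inverted by the Neumann series you propose; the unknown sits on both sides.

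The fix is the same filtration idea you already use, but applied to a genuinely nonlinear fixed-point problem: expand $X_0=\sum_{j\geq 1}X_{0,j}$ and $S_0=\sum_{j\geq 1}S_{0,j}$ in homogeneous $\xi$-degree and observe that the correction terms (products of $\der_\xi X_0$ with $\der_x X_0$, iterated) always have strictly larger $\xi^{-1}$-degree than the linear term $X_0$. Hence $X_{0,j}=S_{0,j}-(\text{polynomial in }X_{0,1},\dots,X_{0,j-1})$, and $X_0$ is determined degree by degree. This is exactly what the paper does: it runs a double induction, first on the $\hbar$-index $n$ and then, inside each $n$, on the $\xi$-degree $j$, solving for the homogeneous pieces $X_{n,j}$ one at a time. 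Your linear-operator/Neumann-series packaging for $n\geq 1$ is an equivalent reformulation of the inner induction in that case, but you must carry out the nonlinear version of it at $n=0$ before the rest of your argument can start.
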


We omit the explicit formula for $X_n$. (See \cite{tak-tak:09}.)
\commentout{

Explicit procedure is as follows:
\begin{itemize}
 \item (Step 0) Assume that $S_0,\dots,S_n$ are given. Expand them into
       homogeneous terms with respect to powers of $\xi$:
       $S_n(x,\xi)=\sum_{j=1}^\infty S_{n,j}(x,\xi)$, where $S_{n,j}$ is
       a term of degree $-j$.
 \item (Step 1) Define $Y^{(l)}_{k,n,j}(x,y,\xi,\eta)$ as follows:
       $Y^{(l)}_{k,-1,j}(x,y,\xi,\eta)=0$,
       $Y^{(l)}_{k,m,1}(x,y,\xi,\eta)  =\delta_{l,0}\delta_{k,0}
       S_{m,1}(x,\xi)$
\commentout{

\begin{align}
    &Y^{(l)}_{k,-1,j}(x,y,\xi,\eta)=0,
\\
    &Y^{(l)}_{k,m,1}(x,y,\xi,\eta)
    =\delta_{l,0}\delta_{k,0} S_{m,1}(x,\xi)
\label{Ylkm1}
\end{align}
}
       for $m=0,\dots,n$, $k\geqq 0$, $l\geqq 0$ and $Y^{(l)}_{0,m,j}=0$ 
\commentout{

\begin{equation}
    Y^{(l)}_{0,m,j}=0
\end{equation}

}
       for $m=0,\dots,n$, $l>0$, $j\geqq 1$. For other $(l,k,m,j)$,
       $(l,k)\neq(0,0)$, $Y^{(l)}_{k,m,j}$ are determined by the
       recursion relation:
\begin{multline}
    Y^{(l)}_{k+1,m,j}(x,y,\xi,\eta)
    =
    \frac{1}{k+1}
    \Biggl(
    \der_\xi \der_y Y^{(l)}_{k,m-1,j-1}(x,y,\xi,\eta)+
\\
    +
    \sum_{\substack{0\leq l' \leq l-1\\
                    0\leq j' \leq j-1, 
                    0\leq m' \leq m \\
                    0\leq k'' \leq j-j'-l+l'}}
    \frac{1}{l-l'}
    \der_\xi Y^{(l')}_{k,m',j'}(x,y,\xi,\eta)
    \der_y Y^{(l-l'-1)}_{k'',m-m',j-j'-1}(x,x,\xi,\xi)
    \Biggr).
\label{recursion:Y(l,k,m)<-Y}
\end{multline}
       The remaining $Y^{(0)}_{0,m,j}$ is determined by:
\begin{equation}
    Y^{(0)}_{0,m,j}(x,y,\xi,\eta)
    =
    S_{m,j}(x,\xi)
    - \sum_{\substack{(l,k)\neq(0,0)\\l,k\geq 0, l+k\leq j}}
      \frac{1}{l+1} Y^{(l)}_{k,m,j}(x,x,\xi,\xi).
\label{Y00mj=Smj-Ylkmj}
\end{equation}
%       (We shall show that $Y^{(l)}_{k,m,j}=0$ for $l+k>j$.)

\commentout{

       Schematically this procedure goes as follows:
\begin{equation*}
 \begin{matrix}
    &  & 
    Y^{(l)}_{k,m,1}=\delta_{l,0}\delta_{k,0}S_{m,1}
\\
    &  & 
    \downarrow\hskip 1.5cm 
\\
    Y^{(l')}_{k',m',1}(m'<m)      & \to  & 
    Y^{(l)}_{k,m,2}\ (k,l\neq 0)  & \to  &
    Y^{(0)}_{0,m,2}               & \leftarrow &
    S_{m,2}
\\
    &  & 
    \downarrow\hskip 1.5cm & \swarrow
\\
    Y^{(l')}_{k',m',1}, Y^{(l')}_{k',m',2}(m'<m)      & \to  & 
    Y^{(l)}_{k,m,3}\ (k,l\neq 0)  & \to  &
    Y^{(0)}_{0,m,3}               & \leftarrow &
    S_{m,3}
\\
    &  & 
    \vdots\hskip 1.5cm
 \end{matrix}
\end{equation*}

}

\item (Step 2) $X_n(x,\xi)=\sum_{j=1}^\infty
       Y^{(0)}_{0,n,j}(x,x,\xi,\xi)$.  (The infinite sum is the
       homogeneous expansion in terms of powers of $\xi$.)
\end{itemize}
}

Combining these propositions with the results in \secref{sec:recursion},
we can, in principle, make a recursion formula for $S_n$
($n=0,1,2,\dots$) of the wave function of the solution of the KP
hierarchy corresponding to the quantised canonical transformation
$(f,g)$ as follows: let $S_0,\dots,S_{i-1}$ be given.
\begin{enumerate}
 \item By \propref{prop::exp(S):=exp(:X:)} we have $X_0,\dots,X_{i-1}$.
 \item We have a recursion formula for $X_i$ by
       \thmref{thm:recursion:X}. 
 \item \propref{prop:exp(:X:)=:exp(S):} gives a formula for $S_i$.
\end{enumerate}
If we take the factor $(\hbar\der)^{\alpha/\hbar}$ into account, this
process becomes a little bit complicated, but essentially the same.

%\bigskip
%%
%The rest of this section is devoted to the proof of
%\propref{prop:exp(:X:)=:exp(S):} and \propref{prop::exp(S):=exp(:X:)}. 

\commentout{

If one would consider that ``the principal symbol is obtained just by
replacing $\hbar\der$ with $\xi$'', the statements of the above
propositions might seem obvious, but since the multiplication in the
definition of
%\begin{equation}
$
    e^{X/\hbar} 
    = \sum_{n=0}^\infty \frac{1}{n!} \hbar^{-n} X(\hbar,x,\hbar\der)^n
$
%\label{exp(X)}
%\end{equation}
is non-commutative, while the multiplication of total symbols in the
series 
%\begin{equation}
$
    e^{S/\hbar} 
    = \sum_{n=0}^\infty \frac{1}{n!}\hbar^{-n} S(\hbar,x,\xi)^n
$
%\label{exp(S)}
%\end{equation}
is commutative, %it needs to be proved. 
the proof is quite non-trivial.

In fact, computation of the
simplest example of $X=x (\hbar\der)^{-1}$ would show how complicated
the formula can be:
\begin{equation*}
 \begin{split}
    & \totsym(e^{x(\hbar\der)^{-1}/\hbar}) =
    \sum_{n=0}^\infty \frac{1}{n!\hbar^n} \totsym(X^n)
\\
    =&
    1 + \frac{1}{1! \hbar} x\xi^{-1}
    + \frac{1}{2! \hbar^2} (x^2 \xi^{-2} - \hbar x \xi^{-3})
\\
    &+ \frac{1}{3! \hbar^3} 
      (x^3 \xi^{-3} - 3 \hbar x^2 \xi^{-4} + 3 \hbar^2 x \xi^{-5})
\\
    &+ \frac{1}{4! \hbar^4}
      (x^4 \xi^{-4} - 6 \hbar x^3 \xi^{-5} + 15 \hbar^2 x^2 \xi^{-6}
       - 15 \hbar^3 x \xi^{-7})
    + \cdots
\\
    =&
    \exp 
    \frac{1}{\hbar}\left(
     x\xi^{-1} - \frac{x\xi^{-3}}{2} + \frac{x\xi^{-5}}{2}
     - \frac{5x\xi^{-7}}{8} + \cdots
    \right).
 \end{split}
\end{equation*}
It is not obvious, at least to authors, why there is no more negative
powers of $\hbar$ in the last expression, which can be obtained by
calculating the logarithm of the previous expression.

}

\commentout{

To avoid confusion, the commutative multiplication of total symbols
$a(\hbar,x,\xi)$ and $b(\hbar,x,\xi)$ as power series is denoted by
$a(\hbar,x,\xi)\, b(\hbar,x,\xi)$ and the non-commutative multiplication
corresponding to the operator product is denoted by $a(\hbar,x,\xi)
\circ b(\hbar,x,\xi)$. Recall that the latter multiplication is
expressed (or defined) as follows:
\begin{equation}
 \begin{split}
    a(\hbar,x,\xi) \circ b(\hbar,x,\xi)
    &=
    e^{\hbar\der_\xi \der_y} 
    a(\hbar,x,\xi) b(\hbar,y,\eta)|_{y=x,\eta=\xi}
\\
    &=
    \sum_{n=0}^\infty\frac{\hbar^n}{n!}
    \der_\xi^n a(\hbar,x,\xi) \der_y^n b(\hbar,y,\eta)|_{y=x,\eta=\xi}.
\label{def:symbol-prod}
 \end{split}
\end{equation}
(See, for example, \cite{sch:85}, \cite{aok:86} or \cite{kas-rou:08}.)
The order of an operator corresponding to symbol $a(\hbar,x,\xi)$ is
denoted by $\ordx a(\hbar,x,\xi)$, which is the same as the order of
$a(\hbar,x,\xi)$ as a power series of $\xi$. The $\hbar$-order is the
same as that of the operator: $\ordh x = \ordh \xi = 0$, $\ordh \hbar =
-1$.

The main idea of proof of propositions is due to Aoki \cite{aok:86},
where exponential calculus of pseudodifferential operators is
considered. He considered analytic symbols of exponential type, while
our symbols are formal ones. Therefore we have only to confirm that
those symbols make sense as formal series.

First, we prove the following lemma.
\begin{lem}
\label{lem:prod-exp-symbol}
 Let $a(\hbar,x,\xi)$, $b(\hbar,x,\xi)$, $p(\hbar,x,\xi)$ and
 $q(\hbar,x,\xi)$ be symbols such that
 $\ordx a(\hbar,x,\xi) = M$, $\ordh a(\hbar,x,\xi) = 0$, $\ordx
 b(\hbar,x,\xi) = N$, $\ordh b(\hbar,x,\xi) = 0$, $\ordx p(\hbar,x,\xi)
 = \ordx q(\hbar,x,\xi) = 0$, $\ordh p(\hbar,x,\xi) = \ordh
 q(\hbar,x,\xi) = 0$.

 Then there exist symbols $c(\hbar,x,\xi)$ ($\ordx c(\hbar,x,\xi) =
 N+M$, $\ordh c(\hbar,x,\xi) = 0$) and $r(\hbar,x,\xi)$ ($\ordx
 r(\hbar,x,\xi) = 0$, $\ordh r(\hbar,x,\xi) = 0$) such that
\begin{equation}
    \bigl(a(\hbar,x,\xi) e^{p(\hbar,x,\xi)/\hbar}\bigr)
    \circ
    \bigl(b(\hbar,x,\xi) e^{q(\hbar,x,\xi)/\hbar}\bigr)
    =
    c(\hbar,x,\xi) e^{r(\hbar,x,\xi)/\hbar}.
\label{prod-exp-symbol}
\end{equation}
\end{lem}

In the proof of \propref{prop:exp(:X:)=:exp(S):} and
\propref{prop::exp(S):=exp(:X:)}, we use the construction of $c$ and $r$
in the proof of \lemref{lem:prod-exp-symbol}.

\begin{proof}
 Following \cite{aok:86}, we introduce a parameter $t$ and consider
\begin{equation}
    \pi(t) = \pi(t;\hbar,x,y,\xi,\eta):=
    e^{\hbar t \der_\xi \der_y}
    a(\hbar,x,\xi) b(\hbar,y,\eta)
    e^{\bigl(p(\hbar,x,\xi) + q(\hbar,y,\eta)\bigr)/\hbar}.
\label{def:pi(t)}
\end{equation}
 If we set $t=1$, $y=x$ and $\eta=\xi$, this reduces to the operator
 product of \eqref{def:symbol-prod}. The series $\pi(t)$ satisfies a
 differential equation with respect to $t$:
\begin{equation}
    \der_t \pi = \hbar \der_\xi\der_y \pi, \qquad
    \pi(0) = a(\hbar,x,\xi) b(\hbar,y,\eta) 
    e^{\bigl(p(\hbar,x,\xi) + q(\hbar,y,\eta)\bigr)/\hbar}.
\label{diff-eq:pi}
\end{equation}
 We construct its solution in the following form:
\begin{equation}
 \begin{aligned}
    \pi(t) &= \psi(t) e^{w(t)/\hbar},
\\
    \psi(t) &= \psi(t;\hbar,x,y,\xi,\eta)
             = \sum_{n=0}^\infty \psi_n t^n,
\\
    w(t) &= w(t;\hbar,x,y,\xi,\eta) = \sum_{k=0}^\infty w_k t^k.
 \end{aligned}
\label{pi=psi*ew}
\end{equation}
 Later we set $t=1$ and prove that $\psi(1)$ and $w(1)$ is meaningful as
 a formal power series of $\xi$ and $\eta$. The differential equation
 \eqref{diff-eq:pi} is rewritten as
\begin{equation}
 \begin{split}
    &\frac{\der\psi}{\der t} + \hbar^{-1} \psi \frac{\der w}{\der t}
\\
    =&
    \hbar \der_\xi \der_y \psi 
    + \der_\xi \psi \der_y w + \der_y \psi \der_\xi w
    +
    \psi 
    \left( \der_\xi \der_y w + \hbar^{-1} \der_\xi w \der_y w\right).
 \end{split}
\label{diff-eq:psi,w}
\end{equation}
 Hence it is sufficient to construct
 $\psi(t)=\psi(t;\hbar,x,y,\xi,\eta)$ and $w(t)=w(t;\hbar,x,y,\xi,\eta)$
 which satisfy 
\begin{align}
    \frac{\der w}{\der t} &= \hbar \der_\xi \der_y w
    + \der_\xi w \der_y w,
\label{diff-eq:w}
\\
    \frac{\der \psi}{\der t}
    &= \hbar \der_\xi \der_y \psi 
     + \der_\xi \psi \der_y w + \der_y \psi \der_\xi w.
\label{diff-eq:psi}
\end{align}
 (This is a {\em sufficient} condition but not a necessary condition for
 $\pi = \psi e^w$ to be a solution of \eqref{diff-eq:pi}. The solution
 of \eqref{diff-eq:pi} is unique, but $\psi$ and $w$ satisfying
 \eqref{diff-eq:psi,w} are not unique at all.)

 To begin with, we solve \eqref{diff-eq:w} and determine
 $w(t)$. Expanding it as $w(t) = \sum_{k=0}^\infty w_k t^k$, we have
 a recursion relation and the initial condition
\begin{equation}
 \begin{split}
    w_{k+1} &= \frac{1}{k+1} \left(
    \hbar \der_\xi \der_y w_k +
    \sum_{\nu=0}^k \der_\xi w_\nu \der_y w_{k-\nu}
    \right),
\\
    w_0 &= p(x,\xi) + q(y,\eta),
 \end{split}
\label{recursion:w}
\end{equation}
 which determine $w_k=w_k(\hbar,x,y,\xi,\eta)$ inductively. Note that
 $\ordx w_0\leqq 0$ and $\der_\xi$ lowers the order by one, which
 implies
\begin{equation}
    \ordx w_k \leqq -k.
\label{ordx(wk)}
\end{equation}
 (Here $\ordx$ denotes the order with respect to $\xi$ and $\eta$.) This
 shows that $w(1) = \sum_{k=0}^\infty w_k$ makes sense as a formal
 series of $\xi$ and $\eta$. Moreover it is obvious that $w_k$ and
 $w(1)$ are formally regular with respect to $\hbar$.

 As a next step, we expand $\psi(t)$ as $\psi(t) = \sum_{k=0}^\infty
 \psi_k t^k$ and rewrite \eqref{diff-eq:psi} into a recursion relation
 and the initial condition:
\begin{equation}
 \begin{split}
    \psi_{k+1} &= \frac{1}{k+1}
    \left(
     \hbar \der_\xi \der_y \psi_k +
     \sum_{\nu=0}^k (\der_\xi \psi_\nu \der_y w_{k-\nu}
                   + \der_y \psi_\nu \der_\xi w_{k-\nu})
    \right),
\\
    \psi_0 &= a(x,\xi) b(y,\eta).
 \end{split}
\label{recursion:psi}
\end{equation}
 In this case we have estimate of the order of the terms:
\begin{equation}
     \ordx \psi_k \leqq N+M-k,
\label{ordx:psik}
\end{equation}
 which shows that the inifinite sum $\psi(1) = \sum_{k=0}^\infty \psi_k$
 makes sense. The regularity of $\psi_k$ and $\psi(1)$ is also obvious.

 Thus, we have constructed $\pi(t) = \pi(t;\hbar,x,y,\xi,\eta) =
 \psi(t;\hbar,x,y,\xi,\eta) e^{w(t;\hbar,x,y,\xi,\eta)}$, which is
 meaningful also at $t=1$. Hence the product $a(\hbar,x,\xi) \circ
 b(\hbar,x,\xi) = \pi(1; \hbar,x,x,\xi,\xi)$ is expressed in the form
 $c(\hbar,x,\xi) e^{r(\hbar,x,\xi)/\hbar}$, where
 $c(\hbar,x,\xi)=\psi(1;\hbar,x,x,\xi,\xi)$, $r(\hbar,x,\xi)=
 w(1;\hbar,x,x,\xi,\xi)/\hbar$.
\end{proof}

\begin{proof}[Proof of \propref{prop:exp(:X:)=:exp(S):}]
 We make use of differential equations satisfied by the operator
\begin{equation}
    E(s)=E(s;\hbar,x,\hbar\der) :=
    \exp\left(\frac{s}{\hbar}X(\hbar,x,\hbar\der))\right),
\label{def:E(s)}
\end{equation}
 depending on a parameter $s$. The total symbol of $E(s)$ is defined as 
\begin{equation}
    E(s;\hbar,x,\xi) 
    = \sum_{k=0}^\infty \frac{s^k}{\hbar^k k!} X^{(k)}(\hbar,x,\xi),
    \qquad
    X^{(0)}=1, \qquad X^{(k+1)} = X \circ X^{(k)}.
\label{E:symbol}
\end{equation}
 Taking the logarithm (as a function, not as an operator) of this, we
 can define $S(s)=S(s;\hbar,x,\xi)$ by
\begin{equation}
    E(s;\hbar,x,\xi) =
    e^{\hbar^{-1} S(s;\hbar,x,\xi)}.
\label{E=eS}
\end{equation}
 What we are to prove is that $S(s)$, constructed as a series, makes
 sense at $s=1$ and formally regular with respect to $\hbar$.

 Differentiating \eqref{E=eS}, we have
\begin{equation}
    X(\hbar,x,\xi) \circ E(s;\hbar,x,\xi)
    =
    \frac{\der S}{\der s} e^{S(s;\hbar,x,\xi)/\hbar}.
\label{XE=dS/dsE}
\end{equation}
 By \lemref{lem:prod-exp-symbol} ($a \mapsto X$, $b \mapsto 1$, $p
 \mapsto 0$, $q \mapsto S$) and the technique in its proof, we can
 rewrite the left hand side as follows:
\begin{equation}
    X(\hbar,x,\xi) \circ E(s;\hbar,x,\xi)
    =
    Y(s;\hbar,x,x,\xi,\xi) e^{S(s;\hbar,x,\xi)/\hbar},
\label{XE=Y*eq}
\end{equation}
 where $Y(s;\hbar,x,y,\xi,\eta) = \sum_{k=0}^\infty Y_k$ and
 $Y_k(s;\hbar,x,y,\xi,\eta)$ are defined by
\begin{equation}
 \begin{split}
    &Y_{k+1}(s;\hbar,x,y,\xi,\eta)\\ &=
    \frac{1}{k+1} (\hbar \der_\xi \der_y Y_k(s;\hbar,x,y,\xi,\eta)
    + \der_\xi Y_k(s;\hbar,x,y,\xi,\eta) \der_y S(s;\hbar,y,\eta)),
\\
    &Y_0 (s;\hbar,x,y,\xi,\eta) = X(\hbar,x,\xi).
 \end{split}
\label{recursion:Y:1}
\end{equation}
 $Y_k(s)$ corresponds to $\psi_k$ in the proof of
 \lemref{lem:prod-exp-symbol}, while $w_k$ there corresponds to
 $\delta_{k,0} S(s)$.
 On the other hand, substituting \eqref{XE=Y*eq} into the left hand side
 of \eqref{XE=dS/dsE}, we have
\begin{equation}
    \frac{\der S}{\der s}(s;\hbar,x,\xi)
    =
    Y(s;\hbar,x,x,\xi,\xi).
\label{dS/ds=Y}
\end{equation}

 We rewrite the system \eqref{recursion:Y:1} and \eqref{dS/ds=Y} in
 terms of expansion of $S(s;\hbar,x,\xi)$ and
 $Y_k(s;\hbar,x,y,\xi,\eta)$ in powers of $s$ and $\hbar$:
\begin{equation}
 \begin{split}
    S(s;\hbar,x,\xi) &= \sum_{l=0}^\infty S^{(l)}(\hbar,x,\xi) s^l
    = \sum_{l=0}^\infty \sum_{n=0}^\infty S^{(l)}_n (x,\xi) \hbar^n s^l,
\\
    Y_k(s;\hbar,x,y,\xi,\eta) 
    &= \sum_{l=0}^\infty Y^{(l)}_k(\hbar,x,y,\xi,\eta) s^l
    = \sum_{l=0}^\infty \sum_{n=0}^\infty 
    Y^{(l)}_{k,n} (x,y,\xi,\eta) \hbar^n s^l,
 \end{split}
\label{expansion:S,Y}
\end{equation}
 The coefficient of $\hbar^n s^l$ in \eqref{recursion:Y:1} is
\begin{multline}
    Y^{(l)}_{k+1,n}(x,y,\xi,\eta)
\\    =
    \frac{1}{k+1}
    \left(
    \der_\xi \der_y Y^{(l)}_{k,n-1}(x,y,\xi,\eta)
    +
    \sum_{\substack{l'+l''=l \\ n'+n''=n}}
    \der_\xi Y^{(l')}_{k,n'}(x,y,\xi,\eta)
    \der_y S^{(l'')}_{n''}(y,\eta)
    \right),
\label{recursion:Y(l,k,n)}
\end{multline}
($Y^{(l)}_{k,-1}=0$) and
\begin{equation}
    Y^{(l)}_{0,n}(x,y,\xi,\eta) = \delta_{l,0} X_n(x,\xi),
\label{Yl0n}
\end{equation}
while \eqref{dS/ds=Y} gives
\begin{equation}
    S^{(l+1)}_n(x,\xi)
    = \frac{1}{l+1} \sum_{k=0}^\infty Y^{(l)}_{k,n}(x,x,\xi,\xi).
\label{recursion:S(l+1,n)}
\end{equation}
 We first show that these recursion relations consistently determine
 $Y^{(l)}_{k,n}$ and $S^{(l)}_n$. Then we prove that the infinite sum
 in \eqref{recursion:S(l+1,n)} is finite.

 Fix $n\geqq 0$ and assume that $Y^{(l)}_{k,0},\dots,Y^{(l)}_{k,n-1}$
 and $S^{(l)}_0,\dots,S^{(l)}_{n-1}$ have been determined for all
 $(l,k)$. (When $n=0$, $Y^{(l)}_{k,-1}=0$ as mentioned above and
 $S^{(l)}_{-1}$ can be ignored as it does not appear in the induction.)

\begin{enumerate}
 \item Since $E(s=0)=1$ by the definition \eqref{def:E(s)}, we have
       $S^{(0)}=0$. Hence
\begin{equation}
    S^{(0)}_n=0.
\label{S0n=0}
\end{equation}

 \item Note that
\begin{equation}
    \ordx Y^{(0)}_{0,n} \leqq -1
\label{ordY00n<=-1}
\end{equation}
      because of \eqref{Yl0n} and the assumption  $\ord X \leqq -1$. 

\item When $l=0$, the second sum in the right hand side of the recursion
      relation \eqref{recursion:Y(l,k,n)} is absent because of
      \eqref{S0n=0}. Hence if $n\geqq k+1$, we have
\[
    Y^{(0)}_{k+1,n} = \frac{1}{k+1}\der_\xi\der_y Y^{(0)}_{k,n-1}
    = \dots = \frac{1}{(k+1)!}(\der_\xi\der_y)^{k+1} Y^{(0)}_{0,n-k-1}
    = 0, 
\]
      since $Y^{(0)}_{0,n-k-1}$  does not depend on $y$ thanks to
      \eqref{Yl0n}. If $n<k+1$, the above expression becomes zero
      by $Y^{(0)}_{k-n+1,-1}=0$. Hence together with \eqref{Yl0n}, we
      obtain 
\begin{equation}
    Y^{(0)}_{k,n}=\delta_{k,0} X_n.
\label{Y0kn}
\end{equation}

 \item By \eqref{recursion:S(l+1,n)} we can determine $S^{(1)}_n$:
\begin{equation}
    S^{(1)}_n = \sum_{k=0}^\infty Y^{(0)}_{k,n} = Y^{(0)}_{0,n} = X_n.
\label{S(1,n)=Xn}
\end{equation}

 \item Fix $l_0\geqq 1$ and assume that for all $l=0,\dots,l_0-1$ and
       for all $k=0,1,2,\dots$, we have determined $Y^{(l)}_{k,n}$ and
       that for all $l=0,\dots,l_0$ we have determined
       $S^{(l)}_{n}$. (The steps (3) and (4) are for $l_0=1$.)

       Since $S^{(0)}_{n''}=0$ by \eqref{S0n=0}, the index $l'$ in the
       right hand side of the recursion relation
       \eqref{recursion:Y(l,k,n)} (with $l=l_0$) runs essentially from
       $0$ to $l_0-1$. Hence this relation determines
       $Y^{(l_0)}_{k+1,n}$ from known quantities for all $k\geqq 0$.

       Because of the initial condition $Y_0(s;x,y,\xi,\eta)=X(x,\xi)$
       (cf.\ \eqref{recursion:Y:1}) $Y_0$ does not depend on $s$, which
       means that its Taylor coefficients $Y^{(l_0)}_{0,n}$ vanish for
       all $l_0\geqq 1$:
\begin{equation}
    Y^{(l_0)}_{0,n} = 0.
\end{equation}
       Thus we have determined all
       $Y^{(l_0)}_{k,n}$ ($k=0,1,2,\dots$).
 
 \item We shall prove below that $Y^{(l_0+1)}_{k,n}=0$ if $k>l_0+n+1$.
       Hence the sum in \eqref{recursion:S(l+1,n)} is finite and
       $S^{(l_0+1)}_n$ can be determined. The induction proceeds 
       by incrementing $l_0$ by one.
\end{enumerate}

 Let us prove that $Y^{(l)}_{k,n}$'s determined above satisfy
\begin{gather}
    Y^{(l)}_{k,n}=0, \qquad\text{if\ }k>l+n,
\label{Ylkn=0}
\\
    \ordx Y^{(l)}_{k,n} \leqq -k-l-1.
\label{ordYlkn<-k-l-1}
\end{gather}
 (We define that $\ordx 0 = -\infty$.) In particular, the sum in
 \eqref{recursion:S(l+1,n)} is well-defined and
\begin{equation}
    \ordx S^{(l+1)}_n \leqq -l-1.
\label{ordS(l+1)<-l-1}
\end{equation} 

 If $n=-1$, both \eqref{Ylkn=0} and \eqref{ordYlkn<-k-l-1} are
 obvious. Fix $n_0\geqq 0$ and assume that we have proved \eqref{Ylkn=0}
 and \eqref{ordYlkn<-k-l-1} for $n<n_0$ and all $(l,k)$.
 
 When $n=n_0$ and $l=0$, \eqref{Ylkn=0} and \eqref{ordYlkn<-k-l-1} are
 true for all $k$ because of \eqref{Y0kn} and \eqref{ordY00n<=-1}. 

 Fix $l_0\geqq 0$ and assume that we have prove \eqref{Ylkn=0} and
 \eqref{ordYlkn<-k-l-1} for $n=n_0$, $l\leqq l_0$ and all $k$. As a
 result \eqref{ordS(l+1)<-l-1} is true for $l\leqq l_0$.

 For $k=0$, \eqref{Ylkn=0} is void and \eqref{ordYlkn<-k-l-1} is true
 because of \eqref{Yl0n} and $\ord X_n \leqq  -1$.

 Put $n=n_0$ and $l=l_0+1$ in \eqref{recursion:Y(l,k,n)} and assume that
 $k+1>(l_0+1)+n_0$. Then $k>(l_0+1)+(n_0-1)$, which guarantees that
 $Y^{(l_0+1)}_{k,n_0-1}=0$ by the induction hypothesis on $n$. As we
 mentioned in the step (5) above, $l'$ in the right hand side of
 \eqref{recursion:Y(l,k,n)} runs from $0$ to $l-1=l_0$. Hence, as we are
 assuming that $k>l_0+n$, we have $k>l'+n'$, which leads to
 $Y^{(l')}_{k,n'}=0$ by the induction hypothesis on $l$ and
 $n$. Therefore all terms in the right hand side of
 \eqref{recursion:Y(l,k,n)} vanish and we have
 $Y^{(l_0+1)}_{k+1,n_0}=0$, which implies \eqref{Ylkn=0} for $n=n_0$,
 $l=l_0+1$ and $k\geqq 1$.

 The estimate \eqref{ordYlkn<-k-l-1} is easy to check for $n=n_0$,
 $l=l_0+1$ and $k\geqq 1$ by the recursion relation
 \eqref{recursion:Y(l,k,n)}. (Recall once again that $\der_\xi$ lowers
 the order by one.)

 The step $l=l_0+1$ being proved, the induction proceeds with respect to
 $l$ and consequently with respect to $n$.

\bigskip
 In summary we have constructed $Y(s;\hbar,x,y,\xi,\eta)$ and
 $S(s;\hbar,x,\xi)$ satisfying \eqref{XE=Y*eq} and
 \eqref{dS/ds=Y}. Thanks to \eqref{ordS(l+1)<-l-1},
 $S_n(x,\xi)=\sum_{l=0}^\infty S^{(l)}_n(x,\xi)$ is meaningful as a
 power series of $\xi$. Thus \propref{prop:exp(:X:)=:exp(S):} is proved.
\end{proof}

\begin{proof}[Proof of \propref{prop::exp(S):=exp(:X:)}]
 We reverse the order of the previous proof. Namely, given
 $S(\hbar,x,\xi)$, we shall construct $X(\hbar,x,\xi)$ such that the
 corresponding $S(1;\hbar,x,\xi)$ in the above proof coincides with it.

 Suppose we have such $X(\hbar,x,\xi)$. Then the above procedure
 determine $Y^{(l)}_{k,n}$ and $S^{(l)}_n$. We expand them as follows:
\begin{align*}
    &S(\hbar,x,\xi) = \sum_{n=0}^\infty S_n(x,\xi) \hbar^n, &
    &S_n(x,\xi) = \sum_{j=1}^\infty S_{n,j}(x,\xi),
\\
    &X(\hbar,x,\xi) = \sum_{n=0}^\infty X_n(x,\xi) \hbar^n, &
    &X_n(x,\xi) = \sum_{j=1}^\infty X_{n,j}(x,\xi),
\\
    &S(s;\hbar,x,\xi)
    = \sum_{l=0}^\infty \sum_{n=0}^\infty S^{(l)}_n(x,\xi) \hbar^n s^l,
    &
    &S^{(l)}_n(x,\xi)=  \sum_{j=1}^\infty S^{(l)}_{n,j}(x,\xi),
\\
    &Y_k(s;\hbar,x,y,\xi,\eta)
    & &Y^{(l)}_{k,n}(x,y,\xi,\eta)
\\
    &= \sum_{l=0}^\infty  \sum_{n=0}^\infty 
    Y^{(l)}_{k,n}(x,y,\xi,\eta) \hbar^n s^l,
    & &= \sum_{j=1}^\infty 
    Y^{(l)}_{k,n,j}(x,y,\xi,\eta).
\end{align*}
 Here terms with index $j$ are homogeneous terms of degree $-j$ with
 respect to $\xi$ and $\eta$.

 At the end of this proof we shall determine $X_n$ by \eqref{Yl0n},
\begin{equation}
    X_n(x,\xi) = Y^{(0)}_{0,n}(x,y,\xi,\eta).
\label{Xn=Y(0,0,n)}
\end{equation}
 (In particular, $Y^{(0)}_{0,n}(x,y,\xi,\eta)$ should not depend on $y$
 and $\eta$.) For this purpose, $Y^{(0)}_{0,n}$ should be determined by
\begin{equation}
    Y^{(0)}_{0,n}(x,y,\xi,\eta)
    =
    S_n(x,\xi) 
    - \sum_{\substack{(l,k)\neq(0,0)\\l,k\geq 0}}
      \frac{1}{l+1} Y^{(l)}_{k,n}(x,x,\xi,\xi)
\label{Y00n=Sn-Ylkn}
\end{equation}
 because of \eqref{recursion:S(l+1,n)} and
 $S_n(x,\xi)=\sum_{l=1}^\infty S^{(l)}_n(x,\xi)$.

 Since $\ordx Y^{(l)}_{k,n}$ should be less than $-l-k$ (cf.\
 \eqref{ordYlkn<-k-l-1}), we expect
\begin{equation}
    Y^{(l)}_{k,n,1}=0
\label{Ylkn1=0}
\end{equation}
 for $(l,k)\neq(0,0)$. Hence picking up homogeneous terms of degree $-1$
 with respect to $\xi$ from \eqref{Y00n=Sn-Ylkn}, the following equation
 should hold: 
\begin{equation}
    Y^{(0)}_{0,n,1}=S_{n,1}
\label{Y00n1=Sn1}
\end{equation}
 All $Y^{(l)}_{k,n,1}$ are determined by the above two conditions,
 \eqref{Ylkn1=0} and \eqref{Y00n1=Sn1}. Note also that 
\begin{equation}
    Y^{(l)}_{0,n,j}=0 \ \text{for\ }l\neq 0
\label{Yl0nj=0}
\end{equation}
 because $Y_0$ should not depend on $s$ because of \eqref{Yl0n}.

 Having determined initial conditions in this way, we shall determine
 $Y^{(l)}_{k,n,j}$ inductively. To this end we rewrite the recursion
 relation \eqref{recursion:Y(l,k,n)} by \eqref{recursion:S(l+1,n)} and
 pick up homogeneous terms of degree $j$:
\begin{multline}
    Y^{(l)}_{k+1,n,j}(x,y,\xi,\eta)
    =
    \frac{1}{k+1}
    \Biggl(
    \der_\xi \der_y Y^{(l)}_{k,n-1,j-1}(x,y,\xi,\eta)+
\\
    +
    \sum_{\substack{l'+l''=l,l''\geq 1\\j'+j''=j-1, n'+n''=n\\ k''\geq 0}}
    \frac{1}{l''}
    \der_\xi Y^{(l')}_{k,n',j'}(x,y,\xi,\eta)
    \der_y Y^{(l''-1)}_{k'',n'',j''}(x,x,\xi,\xi)
    \Biggr)
\label{recursion:Y(l,k,n)<-Y}
\end{multline}
 (As before, terms like $Y^{(l)}_{k,-1,j-1}$ appearing the above
 equation for $n=0$ can be ignored.)

 Fix $n_0\geqq 0$ and assume that $Y^{(l)}_{k,0,j}, \dots,
 Y^{(l)}_{k,n_0-1,j}$ are determined for all $(l,k,j)$.
\begin{enumerate}
 \item First we determine $Y^{(l)}_{k,n_0,1}$ for all $(l,k)$ by
       \eqref{Y00n1=Sn1} and \eqref{Ylkn1=0}.

 \item Fix $j_0\geqq 2$ and assume that $Y^{(l)}_{k,n_0,j}$ are
       determined for $j=1,\dots,j_0-1$ and all $(l,k)$. (The above
       step is for $j_0=2$.)

       Since all the quantities in the right hand side of the recursion
       relation \eqref{recursion:Y(l,k,n)<-Y} with $j=j_0$ are known by
       the induction hypothesis, we can determine
       $Y^{(l)}_{k,n_0,j_0}$ for $l=0,1,2,\dots$ and $k=1,2,\dots$.

 \item Together with \eqref{Yl0nj=0}, $Y^{(l)}_{0,n_0,j_0}=0$
       for $l=1,2,\dots$, we have determined all $Y^{(l)}_{k,n_0,j_0}$
       except for the case $(l,k)=(0,0)$.

 \item It follows from \eqref{recursion:Y(l,k,n)<-Y}, \eqref{Y00n1=Sn1}
       and \eqref{Yl0nj=0} by induction that for all
       $Y^{(l)}_{k,n_0,j}$ determined in (1), (2) and (3),
\begin{equation}
    Y^{(l)}_{k,n,j}=0 \ \text{for\ }l+k+1>j.
\label{Ylknj=0:l+k+1>j}
\end{equation}
       This corresponds to $\ordx Y^{(l)}_{k,n_0}\leqq -l-k-1$
       \eqref{ordYlkn<-k-l-1} in the proof of
       \propref{prop:exp(:X:)=:exp(S):}. 

 \item We determine $Y^{(0)}_{0,n_0,j_0}$ by
\begin{equation}
    Y^{(0)}_{0,n_0,j_0}
    =
    S_{n_0,j_0}
    - \sum_{\substack{(l,k)\neq(0,0)\\l,k\geq 0}}
      \frac{1}{l+1} Y^{(l)}_{k,n_0,j_0}(x,x,\xi,\xi),
\label{Y00nj=Snj-Ylknj}
\end{equation}
       which is the homogeneous part of degree $-j_0$ in
       \eqref{Y00n=Sn-Ylkn}. The sum in the right hand side is finite
       thanks to \eqref{Ylknj=0:l+k+1>j}.

 \item The induction with respect to $j$ proceeds by incrementing $j_0$.
\end{enumerate}

 Thus all $Y^{(l)}_{k,n_0,j}$ are determined and $X_{n_0}$ is determined
 by $X_{n_0}(x,\xi)=\sum_{j=1}^\infty Y^{(0)}_{0,n_0,j}$ (cf.\
 \eqref{Xn=Y(0,0,n)}), which completes the proof of
 \propref{prop::exp(S):=exp(:X:)}.
\end{proof}

}

\section{Asymptotics of the tau function}
\label{sec:tau-function}

In this section we derive an $\hbar$-expansion 
%\begin{equation}
    $\log\tau(\hbar,t) = \sum_{n=0}^\infty \hbar^{n-2} F_n(t)$
%\label{tau}
%\end{equation}
of the tau function (cf.\ \eqref{tau/tau}). % from the $\hbar$-expansion
%$S(\hbar,t,\xi)=\sum_{n=0}^\infty \hbar^n S_n(t,\xi)$ of the
%$S$-function.  
Note that we have suppressed the variable $x$, which is
understood to be absorbed in $t_1$.

\commentout{

Let us recall the fundamental relation \cite{djkm} between the wave
function \eqref{def:wave-func} and the tau function again:
\begin{equation}
    \Psi(t;z) =
    \frac{\tau(t-\hbar[z^{-1}])}{\tau(t)}
    e^{\hbar^{-1}\zeta(t,z)},
    \qquad
    [z^{-1}] 
    =
    \left(\frac{1}{z},\frac{1}{2z^2}, \frac{1}{3z^3},\dots \right),
%\label{tau/tau}
\end{equation}
where $\zeta(t,z)=\sum_{n=1}^\infty t_n z^n$. 
}

The logarithmic derivation of \eqref{tau/tau} gives
\commentout{

This implies that 
\begin{equation}
    \hbar^{-1} \hat S(t;z)
    =
    \left(
    e^{-\hbar D(z)} - 1
    \right) \log\tau(t)
\label{tau->w}
\end{equation}
where $\hat S(t;z) = S(t;z)- \zeta(t,z)$ and 
$D(z) =\sum_{j=1}^\infty \frac{z^{-j}}{j} \frac{\der}{\der t_j}$. 
Differentiating this with respect to $z$, we have
\begin{equation}
 \begin{split}
    \hbar^{-1} \frac{\der}{\der z} \hat S(t;z)
    =&
    - \hbar D'(z) e^{-\hbar D(z)} \log\tau(t)
\\
    =&
    - \hbar D'(z) (\hbar^{-1} \hat S(t;z) + \log\tau(t)),
 \end{split}
\label{(tau->w)'}
\end{equation}
where $\hat S(t;z) = S(t;z)- \zeta(t,z)$, 
$D(z) =\sum_{j=1}^\infty \frac{z^{-j}}{j} \frac{\der}{\der t_j}$
and $D'(z):=\frac{\der}{\der z} D(z) = - \sum_{j=1}^\infty z^{-j-1}
\frac{\der}{\der t_j}$. Hence
}
\begin{equation}
    -\hbar D'(z) \log\tau(t)
    =
    \hbar^{-1} \left(\frac{\der}{\der z} + \hbar D'(z)\right)
    \hat S(t;z),
\label{w->tau}
\end{equation}
where $\hat S(t;z) = S(t;z)- \zeta(t,z)$ %, 
%$D(z) =\sum_{j=1}^\infty \frac{z^{-j}}{j} \frac{\der}{\der t_j}$
and $D'(z):=%\frac{\der}{\der z} D(z) =
- \sum_{j=1}^\infty z^{-j-1} \frac{\der}{\der t_j}$. 

\commentout{

Thus 
%Multiplying $z^n$ to this equation and taking the residue, 
we obtain %a system of differential equations
\begin{equation}
    \hbar\frac{\der}{\der t_n} \log \tau(t)
    =
    \hbar^{-1} \Res_{z=\infty} z^n 
    \left(\frac{\der}{\der z} + \hbar D'(z)\right) \hat S(t;z)\, dz,
\label{dtau/dtn}
\end{equation}
for $n=1,2,\dotsc$. %) which is known to be integrable \cite{djkm}. 
%We can thus determine the tau function $\tau(t)$, 
%up to multiplication $\tau(t) \to c\tau(t)$ 
%by a nonzero constant $c$, as a solution of \eqref{w->tau}.  
}

By substituting the $\hbar$-expansions
%\begin{equation}
$
    \log\tau(t) = \sum_{n=0}^\infty \hbar^{n-2} F_n(t)$,
$
    \hat S(t;z) = \sum_{n=0}^\infty \hbar^n S_n(t;z)$,
%\label{logtau,logw}
%\end{equation}
\commentout{

into \eqref{w->tau}, we have
\begin{equation}
    \sum_{j=1}^\infty \sum_{n=0}^\infty z^{-j-1} \hbar^{n-1}
    \frac{\der F_n}{\der t_j}
    =
    \sum_{n=0}^\infty \left(
    \hbar^{n-1} \frac{\der S_n}{\der z}
    - 
    \sum_{j=1}^\infty z^{-j-1} \hbar^n \frac{\der S_n}{\der t_j}
    \right).
\label{S->F}
\end{equation}
Let us expand $S_n(t;z)$ into a power series of $z^{-1}$:
}
and expanding $S_n(t;z)$ as
%\begin{equation}
$S_n(t;z) = - \sum_{k=1}^\infty \frac{z^{-k}}{k} v_{n,k}$,
%\label{Sn:expand}
%\end{equation}
%(The notation is chosen to be consistent with our previous work, 
%e.g., \cite{tak-tak:95}.) Comparing the coefficients of $z^{-j-1}
%\hbar^{n-1}$ in \eqref{S->F}, 
we have the equations 
\begin{equation}
    \frac{\der F_n}{\der t_j}
    =
    v_{n,j} 
    + \sum_{\substack{k+l=j\\ k\geq 1, l\geq 1}}
      \frac{1}{l}\, \frac{\der v_{n-1,l}}{\der t_k}
    \qquad (v_{-1,j}=0).
\label{dFn/dtj}
\end{equation}
%which may be understood as defining equations of $F_n(t)$.  
%According to what we have seen above, this system 
%of differential equations is integrable 
This system determines $F_n$ up to integration constants. 

Combining this with the results in \secref{sec:recursion},
\secref{sec:wave-function}, if $F_0$, which determines a solution of the
dispersionless KP hierarchy, and the corresponding quantised canonical
transformation $(f,g)$ are given, we can construct $F_n$ recursively and
consequently the tau function of a solution of the $\hbar$-KP
hierarchy. 

\commentout{

\begin{rem}
\label{rem:genus-exp}
Tau functions in string theory and random matrices 
are known to have a {\em genus expansion} of the form
\begin{equation}
    \log \tau = \sum_{g=0} \hbar^{2g-2} \calF_g,
\label{genus-exp}
\end{equation}
where $\calF_g$ is the contribution from Riemann surfaces of genus $g$.
In contrast, general tau functions of the $\hbar$-dependent KP hierarchy
is not of this form, namely, odd powers of $\hbar$ can appear in the
$\hbar$-expansion of $\log\tau$. To exclude odd powers therein, we need
to impose conditions
\[
    0 =
    v_{2m+1,j} 
    + \sum_{\substack{k+l=j\\ k\geq 1, l\geq 1}}
      \frac{1}{l}\, \frac{\der v_{2m,l}}{\der t_k}
\]
 on $v_{n,j}$ or
\[
    0 =
    \frac{\der S_{2m+1}}{\der z}
    - 
    \sum_{j=1}^\infty z^{-j-1} \frac{\der S_{2m}}{\der t_j}
\]
 on $S_n$.
%
%\hfill
%$\square$
\end{rem}

}

\commentout{

\section{Concluding remarks}
\label{sec:conclusion}

We have presented a recursive construction of 
solutions of the $\hbar$-dependent KP hierarchy.  
The input of this construction is the pair $(f,g)$ 
of quantised canonical transformation. 
The main outputs are the dressing operator $W$ 
in the exponential form (\ref{W=exp(X)}), 
the wave function $\Psi$ in the WKB form (\ref{wave-func}) 
and the tau function with the quasi-classical expansion (\ref{tau}).  
Thus the $\hbar$-dependent KP hierarchy introduced 
in our previous work \cite{tak-tak:95} is no longer 
a heuristic framework for deriving the dispersionless 
KP hierarchy, but has its own raison d'\^{e}tre.  

A serious problem of our construction is that 
the recursion relations are extremely complicated.
In Appendix B, calculations are illustrated 
for the Kontsevich model \cite{kon:92}, \cite{dij:91},
\cite{adl-vaM:92}. As this example shows, 
this is by no means a practical way to construct 
a solution.
We believe that one cannot avoid 
this difficulty as far as general solutions 
are considered. 

Special solutions stemming from string theory and random matrices
\cite{mor:94}, \cite{dfgz} (e.g. the Kontsevich model) can admit a more
efficient approach such as the method of Eynard and Orantin
\cite{eyn-ora:07}.  Those methods are based on a quite different
principle.  In the method of Eynard and Orantin, it is the so called
``loop equation'' for correlation functions of random matrices.  The
loop equations amount to ``constraints'' on the tau function.  Eynard
and Orantin's ``topological recursion relations'' determine a solution
of those constraints rather than of an underlying integrable hierarchy;
it is somewhat surprising that a solution of those constraints gives a
tau function.

Lastly, let us mention that the results of this paper 
can be extended to the Toda hierarchy.  
That case will be treated in a forthcoming paper.
}

\commentout{
\appendix
\section{Proof of formulae \eqref{W=exp(Xi)exp(Xi-1)} and
 \eqref{tildeXi->Xi}}
\label{app:proof-campbell-hausdorff}

In this appendix we prove the factorisation of $W$
\eqref{W=exp(Xi)exp(Xi-1)} and an auxiliary formula
\eqref{tildeXi->Xi}. 

The main tool in this appendix is the Campbell-Hausdorff theorem: 
\begin{equation}
    \exp(X) \exp(Y)
    =
    \exp\left(
     \sum_{n=0}^\infty c_n(X, Y)
    \right),
\label{CH}
\end{equation}
where $c_n(X,Y)$ is determined recursively: 
\begin{equation}
 \begin{split}
    &c_1(X,Y) = X+Y,
\\
    &c_{n+1}(X,Y)
    =
    \frac{1}{n+1} \Biggl( 
    \frac{1}{2} [X-Y, c_n] +
\\
    &+
    \sum_{p\geq 1, 2p \leq n}
     \frac{B_{2p}}{(2p)!} \sum_{\substack{(k_1,\dots,k_{2p})\\ k_1+\cdots+k_{2p}=n}}
     [c_{k_1},[\cdots, [ c_{k_{2p}}, X+Y]\cdots]]
    \Biggr).
 \end{split}
\label{def:cn}
\end{equation}
The coefficients $\frac{B_{2p}}{(2p)!}$ are defined by \eqref{def:K2p}. See, for
example, \cite{bourbaki}.

First we prove
\begin{multline}
    \exp( \hbar^{-1} X(x,t,\hbar\der) )
\\
    =
    \exp\left(
     \hbar^{i-1} \tilde X'_i + 
     (\text{terms of $\hbar$-order $<-i+1$})
    \right)    
    \exp\left( \hbar^{-1} X^{(i-1)} \right),
\label{exp(X)=exp(Xi)exp(Xi-1)}
\end{multline}
where the principal symbol of $\tilde X'_i$ is
\begin{equation}
    \symh(\tilde X'_i) := 
    \sum_{n=1}^\infty \frac{(\ad_{\{,\}} \symh(X_0) )^{n-1}}{n!}
    \symh(X_i),
\label{def:tildeX'i}
\end{equation}
as is defined in \eqref{tildeXi->Xi}. For simplicity, let us denote
\begin{equation}
    A := \frac{1}{\hbar} X^{(i-1)}
       = \frac{1}{\hbar} \sum_{j=0}^{i-1} \hbar^j X_j, \qquad
    B := \frac{1}{\hbar} \sum_{j=i}^\infty \hbar^j X_j.
\label{def:A,B}
\end{equation}
Note that $A+B=X/\hbar$ and $\ordh A\leqq 1$, $\ordh B\leqq -i+1$. We
prove the following by induction:
\begin{equation}
    C_n := c_n(A+B,-A) =
    \frac{(\ad A)^{n-1}}{n!} (B)
    +
    (\text{terms of $\hbar$-order $<-i+1$}).
\label{cn(A+B,-A)}
\end{equation}
This is obvious for $n=1$ since $C_1=(A+B)+(-A)=B$. Assume that
\eqref{cn(A+B,-A)} is true for $n=1,\dots,N$. This means, in particular,
$\ordh C_n \leqq \ordh B \leqq 0$ ($1\leqq n \leqq N$), which implies
that for any operator $Z$ $\ordh [C_n,Z]$ is less than $\ordh Z$ by more
than one. Hence the term of the highest $\hbar$-order in the recursive
definition \eqref{def:cn} with $X=A+B$, $Y=-A$ is the first term. More
precisely, it is decomposed as
\begin{equation*}
    \frac{1}{N+1} \cdot \frac{1}{2} [(A+B)-(-A), C_N]
    =
    \frac{1}{N+1}[A,C_N] + \frac{1}{2(N+1)}[B,C_N],
\end{equation*}
and the first term in the right hand side has the highest
$\hbar$-order. By the induction hypothesis and $\ordh A \leqq 1$, we
have 
\begin{equation}
 \begin{split}
    \frac{1}{N+1}[A, C_N] &=
    \frac{1}{N+1}\left[ A,
    \frac{(\ad A)^{N-1}}{N!} (B)
    +
    (\text{terms of $\hbar$ order $<-i+1$})
    \right]
\\
    &=
    \frac{(\ad A)^N}{(N+1)!} (B)
    +
    (\text{terms of $\hbar$-order $<-i+1$}).
 \end{split}
\end{equation}
This proves \eqref{cn(A+B,-A)} for all $n$. Taking its symbol of order
$-i+1$, we have
\begin{equation}
    \symh( c_n(A+B,-A)) =
    \frac{(\ad_{\{,\}} \symh(A))^{n-1}}{n!} \symh(B),
\label{sigma(cn(A+B,-A))}
\end{equation}
which gives the terms of \eqref{def:tildeX'i}. Substituting this into
the Campbell-Hausdorff formula \eqref{CH}, we have
\eqref{exp(X)=exp(Xi)exp(Xi-1)}. 

By factorisation \eqref{exp(X)=exp(Xi)exp(Xi-1)}, we can factorise
$W=\exp(X/\hbar) (\hbar\der)^\alpha$ as follows
($\alpha^{(i-1)}:=\sum_{j=0}^{i-1}\hbar^j \alpha_j$):
\begin{equation}
 \begin{split}
   &\exp( \hbar^{-1} X(x,t,\hbar\der) ) (\hbar\der)^\alpha
\\
    =&
    \exp\left(
     \hbar^{i-1} \tilde X'_i + 
     (\text{terms of $\hbar$-order $<-i+1$})
    \right)    
    \exp\left( \hbar^{-1} X^{(i-1)} \right) \times
\\
    &\times
    \exp\left(
     \hbar^{i-1} \alpha_i \log(\hbar\der) + 
     (\text{terms of $\hbar$-order $<-i+1$})
    \right) \times
\\
    &\times
    \exp\left( \hbar^{-1} \alpha^{(i-1)} \log(\hbar\der)\right)
\\
    =&
    \exp\left(
     \hbar^{i-1} \tilde X'_i + 
     (\text{terms of $\hbar$-order $<-i+1$})
    \right)    
    \times
\\
    &\times
    \exp\left(
    e^{\ad (\hbar^{-1} X^{(i-1)})}
     \bigl(
       \hbar^{i-1} \alpha_i \log(\hbar\der) + 
      (\text{terms of $\hbar$-order $<-i+1$})
     \bigr)
    \right) \times
\\
    &\times
    \exp\left( \hbar^{-1} X^{(i-1)} \right)
    \exp\left( \hbar^{-1} \alpha^{(i-1)} \log(\hbar\der)\right).
 \end{split}
\label{W=exp(Xi)exp(Xi-1):temp}
\end{equation}
Since the symbol of order $-i+1$ of 
$
    e^{\ad (\hbar^{-1} X^{(i-1)})}
     \bigl(
       \hbar^{i-1} \alpha_i \log(\hbar\der)
     \bigr)
$
is $e^{\ad_{\{,\}} \symh(X_0)}(\alpha_i \log\xi)$,
\eqref{W=exp(Xi)exp(Xi-1):temp} is rewritten as
\eqref{W=exp(Xi)exp(Xi-1)} by using the Campbell-Hausdorff formula
\eqref{CH} once again.

\bigskip
In order to recover $X_i$ from $\tilde X_i$ (or $\tilde X'_i$), we have
only to invert the definition \eqref{def:tildeX'i}. In the definition
\eqref{def:tildeX'i} of the map $X_i\mapsto \tilde X_i'$ we substitute
$\ad_{\{,\}}(\symh(X_0))$ in the equation
\begin{equation*}
    \frac{e^t-1}{t} = \sum_{n=1}^\infty \frac{t^{n-1}}{n!}.
\end{equation*}
Hence substitution $t=\ad_{\{,\}}(\symh(X_0))$ in its inverse
\begin{equation*}
    \frac{t}{e^t-1} 
    = 1 - \frac{t}{2} + \sum_{p=1}^\infty \frac{B_{2p}}{(2p)!} t^{2p}
\end{equation*}
gives the inverse map $\tilde X'_i \mapsto X_i$. Here the coefficients
$\frac{B_{2p}}{(2p)!}$ are defined in \eqref{def:K2p}. Hence equation
\eqref{tildeXi->Xi}:
\begin{equation*}
    \symh(X_i) 
    = \symh(\tilde X'_i) 
    - \frac{1}{2} \{\symh(X_0),\symh(\tilde X'_i)\}
    + \sum_{p=1}^\infty \frac{B_{2p}}{(2p)!}
      (\ad_{\{,\}} (\symh(X_0)))^{2p} \symh(\tilde X'_i)
\end{equation*}
gives the symbol of $X_i$.
%
%\hfill
%$\square$

}

\bigskip
%
%{\em Acknowledgments}:
\paragraph*{\em Acknowledgments}

The authors are grateful to Professor Akihiro Tsuchiya 
for drawing our attention to this subject.
% and to 
%Professor Masatoshi Noumi for instructing how to use 
%Mathematica in algebra of microdifferential operators.  
%Computations in Appendix B were done with the aid of 
%one of his Mathematica programmes.  
This work is partly supported by Grants-in-Aid for Scientific Research
No.\ 19540179 and No.\ 22540186 from the Japan Society for the Promotion
of Science.  TT is partly supported by the grant of the National
Research University -- Higher School of Economics, Russia, for the
Individual Research Project 10-01-0043 (2010).

This is a contribution to the Proceedings of the ``International
Workshop on Classical and Quantum Integrable Systems 2011'' (January
24-27, 2011 Protvino, Russia). TT thanks the organisers of the workshop
for hospitality.

\end{document}